\newcommand{\wt}{\mathsf{wt}}
\newtheorem{new-claim}{Claim}
\begin{document}
\pagestyle{plain}
\title{Popular Matchings with One-Sided Bias\thanks{A preliminary version of this paper appeared in ICALP~2020~\cite{Kav20}.}}
\author{Telikepalli Kavitha\thanks{Work done while visiting Max-Planck-Institut f\"ur Informatik, Saarbr\"ucken, Germany.}}
\institute{Tata Institute of Fundamental Research, Mumbai, India \\ \email{kavitha@tifr.res.in}}
\maketitle

\begin{abstract}
  Let $G = (A \cup B,E)$ be a bipartite graph where the set $A$ consists of agents or {\em main players} and the set $B$
  consists of jobs or {\em secondary players}. Every vertex has a strict ranking of its neighbors. A matching $M$ is 
  {\em popular} if for any matching $N$, the number of  vertices that prefer $M$ to $N$ is at least the number that prefer 
  $N$ to $M$. Popular matchings always exist in $G$ since every stable matching is popular.

\smallskip

  A matching $M$ is {\em $A$-popular} if for any matching $N$, the number of  {\em agents} (i.e., vertices in $A$) that prefer
  $M$ to $N$ is at least the number of agents that prefer $N$ to $M$. Unlike popular matchings, $A$-popular matchings need not
  exist in a given instance $G$ and there is a simple linear time algorithm to decide if $G$ admits an $A$-popular matching and
  compute one, if so.

\smallskip

  We consider the problem of deciding if $G$ admits a matching that is both popular and $A$-popular and finding one, if so.
  We call such matchings {\em fully popular}. A fully popular matching is useful when $A$ is the more important side---so along with
  overall popularity, we would like to maintain ``popularity within~the~set~$A$''.
  A fully popular matching is not necessarily a min-size/max-size popular matching
  and all known polynomial-time algorithms for popular matching problems compute either min-size or max-size popular matchings.
  Here we show a linear time algorithm for the fully popular matching problem, thus our result shows a new tractable subclass of
  popular matchings.
\end{abstract}

\section{Introduction}
\label{sec:intro}

Let $G = (A \cup B, E)$ be a bipartite graph where vertices in $A$ are called {\em agents} and those in $B$ are called {\em jobs}.
Every vertex has a strict ranking of its neighbors. Such a graph, also called a {\em marriage} instance,
is a very well-studied model in two-sided matching markets. A matching $M$ in $G$ is {\em stable} if there is 
no {\em blocking edge} with respect to $M$, i.e., no edge $(a,b)$ such that $a$ and $b$ prefer each other to their
respective assignments in $M$. Gale and Shapley~\cite{GS62} in 1962 showed that
stable matchings always exist in $G$ and can be efficiently computed. 

Stable matching algorithms have applications in several real-world problems. For instance, stable matchings have been extensively
used to match students to schools and colleges \cite{AS03,BCCKP18} and one of the oldest applications here is to match medical
residents to hospitals~\cite{CRMS,NRMP}. It is known that all stable matchings in $G$ have the same size~\cite{GS85} and this may
only be half the size of a maximum matching in $G$. Consider the following instance on four vertices $a_0,a_1,b_0,b_1$.
The preferences of these four vertices are as follows:
\[a_0: b_1 \ \ \ \ \ \ \ \ \ \ \ \ a_1: b_1 \succ b_0 \ \ \ \ \ \ \ \ \ \ \ \ b_0: a_1 \ \ \ \ \ \ \ \ \ \ \ \ b_1: a_1 \succ a_0.\]
Here $a_1$ and $b_1$ are each other's top choices. There is no edge between $a_0$ and $b_0$.
Note that $M_{\max} = \{(a_0,b_1),(a_1,b_0)\}$ has size~2 while the only stable matching
$S = \{(a_1,b_1)\}$ has size~1.

Hence forbidding blocking edges constrains the size of the resulting matching. 
Rather than empower every edge with a ``veto power'' to block matchings (this is the notion of stability),
we would like to relax stability so that the power to block matchings gets diffused among all the vertices. That is, rather
than a single pair of vertices declaring that a given matching is infeasible, we now want all the vertices to 
participate in deciding if a given matching is feasible or not.
The motivation is to obtain a larger pool of feasible matchings
in order to obtain improved matchings with respect to size or any other desired objective.

\paragraph{\bf Popularity.}
The notion of {\em popularity} is a natural relaxation of stability that captures collective welfare. 
Consider an election between two matchings $M$ and $N$ where vertices are voters. Preferences of a vertex 
over its neighbors extend naturally to preferences over matchings---in the $M$ versus~$N$ election, each vertex 
votes for the matching in $\{M,N\}$ that it prefers, i.e., where it gets a better assignment.
Note that a vertex abstains from voting if it has the same assignment in both $M$~and~$N$;
also, being left unmatched is the worst choice for any vertex.
Let $\phi(M,N)$ (resp., $\phi(N,M)$) be the number of votes for $M$ (resp., $N$) in this election.

\begin{definition}
A matching $M$ is {\em popular} if $\phi(M,N) \ge \phi(N,M)$ for all matchings $N$ in $G$. 
\end{definition}

So a popular matching never loses a head-to-head election against any matching, in other words, it is a weak 
{\em Condorcet winner}~\cite{Con85,condorcet} in the voting instance where matchings are candidates and vertices are voters. 
The notion of popularity was introduced by G\"ardenfors~\cite{Gar75} who showed that every stable matching is popular. 
So popular matchings always exist in any marriage instance. In fact, every stable matching is a {\em min-size} popular matching~\cite{HK11}. Going to back to our earlier example on the four vertices $a_0,a_1,b_0,b_1$, the matching
$M_{\max} = \{(a_0,b_1),(a_1,b_0)\}$, though unstable, is popular.
There are efficient algorithms to compute a {\em max-size} popular matching~\cite{HK11,Kav14}.

Popular matchings are suitable in applications such as matching students to projects where students and project advisers 
have strict preferences. By relaxing stability to popularity, we can obtain better matchings in terms of size (as in the 
above example) or some other desired objective. We consider a natural and relevant objective here: observe that the two sides of 
$G = (A \cup B, E)$ are {\em asymmetric} in this application---students are the {\em doers} of the projects, i.e., they are 
the main or more active players, while project advisers are the secondary or more passive players. So along with overall
popularity, we would like to maintain ``popularity within~the~set~$A$''.

That is, we would like the popular matching that we compute 
to be popular even when we {\em only} count the votes of vertices in the set $A$, so there should be no matching that is preferred by more
vertices in $A$. Popularity within the set $A$ is the notion of popularity with {\em one-sided} preferences and we will refer
to this as {\em $A$-popularity} here. In the 
$M$ versus $N$ election, let  $\phi_A(M,N)$ (resp., $\phi_A(N,M)$) be the number of vertices in $A$ that vote for $M$ (resp., $N$).

\begin{definition}
A matching $M$ is {\em $A$-popular} if $\phi_A(M,N) \ge \phi_A(N,M)$ for all matchings $N$ in $G$.
\end{definition}

Matchings that are $A$-popular have been well-studied~\cite{AIKM07,KMN09,MS06,Mestre06} and $A$-popular matchings are relevant
in applications such as assigning training posts to applicants~\cite{AIKM07} and housing allocation schemes~\cite{MS06} where
vertices on only one side of the graph have preferences over their neighbors. An $A$-popular matching need not necessarily exist
in a given instance.
Consider the following instance on three agents $a_1,a_2,a_3$ where all the agents have identical preferences as shown below.
\[a_1: b_1 \succ b_2 \succ b_3 \ \ \ \ \ \ \ \ \ \ \ \ \ \ \ a_2: b_1 \succ b_2 \succ b_3 \ \ \ \ \ \ \ \ \ \ \ \ \ \ \ a_3: b_1 \succ b_2 \succ b_3.\]
It is easy to check that none of the matchings in the above instance is $A$-popular.
Let $M_0 = \{(a_1,b_1),(a_2,b_2),(a_3,b_3)\}$ and $M_1 = \{(a_1,b_3),(a_2,b_1),(a_3,b_2)\}$. We have
$\phi_A(M_1,M_0) = 2 > 1 = \phi_A(M_0,M_1)$ since $a_2$ and $a_3$ prefer $M_1$ to $M_0$ while $a_1$ prefers $M_0$ to $M_1$.
Similarly, for any matching~$M$ in this instance, there exists some other matching that is more ``$A$-popular'' than $M$.
Thus  the above instance has no $A$-popular matching.

We now seek matchings that are both popular {\em and} $A$-popular. So let us define the following subclass of popular 
matchings.

\begin{definition}
A popular matching $M$ in $G = (A \cup B, E)$ is {\em fully popular} if $M$ is also $A$-popular. So for any matching $N$ in $G$, we have:
$\phi(M,N) \ge \phi(N,M)$ and  $\phi_A(M,N) \ge \phi_A(N,M)$.
\end{definition}

There may be exponentially many popular matchings in $G = (A \cup B,E)$. So when
$A$ is the more important/active side, say it consists of those doing their projects/internships/jobs, it is natural 
to seek a popular matching that is $A$-popular as well, i.e., a {\em fully popular} matching.
We show the following result here.

\begin{restatable}{theorem}{MainTheorem}
\label{thm:algo}
There is a linear time algorithm to decide if a marriage instance $G = (A \cup B, E)$ with strict preferences admits a fully popular matching
or not. If so, our algorithm returns a max-size fully popular matching.
\end{restatable}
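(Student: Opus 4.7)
The plan is to combine the dual characterization of popularity with the Abraham--Irving--Kavitha--Mehlhorn characterization of $A$-popularity, extract a structural description of fully popular matchings, and reduce the problem to a matching computation on a graph of linear size.

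First I would set up and join two dual witness systems. Recall that $M$ is popular iff it admits a labeling $\alpha : A \cup B \to \{-1,0,+1\}$ with $\sum_v \alpha_v = 0$ satisfying $\alpha_a + \alpha_b \ge \wt_M(a,b)$ for every edge, where $\wt_M(a,b) \in \{-2,0,+2\}$ encodes whether $(a,b)$ blocks $M$. Analogously, $M$ is $A$-popular iff there is a labeling of $A$-vertices witnessing that no alternating structure increases the $A$-score. A fully popular $M$ must admit both certificates simultaneously. Analyzing their joint feasibility, I would argue that every matched agent $a$ is restricted to at most two admissible partners: its top choice $f(a)$, and a specific second-best partner $s(a)$ determined by the instance, in direct analogy to the $A$-popular characterization in the one-sided setting. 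Unmatched agents must satisfy a tight side condition (their $f$-partner is saturated by a higher-priority agent and they have no admissible $s$-edge).

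Second I would use this restriction to define a sparse auxiliary graph $G' \subseteq G$ consisting only of $f$- and $s$-edges; it has $O(|A|+|B|)$ edges. The key claim to verify is that $M$ is fully popular iff $M$ is a matching in $G'$ that covers a maximum admissible subset of $A$ and additionally satisfies an explicit local compatibility condition on $B$-preferences ruling out any popularity-violating alternating path or cycle within $G'$. The algorithm then (i) computes $f(a)$ and $s(a)$ for every $a \in A$ in one pass, (ii) builds $G'$, (iii) finds a max-size matching in $G'$ covering as many agents as possible by an adapted Gale--Shapley / $A$-popular procedure that exploits the two-layer structure of $G'$, and (iv) verifies both dual witnesses in a single sweep over $E$. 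If any step fails, the algorithm correctly reports that no fully popular matching exists; otherwise, the matching produced in step (iii) is the claimed max-size fully popular matching.

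The main obstacle is the structural characterization. In the two-sided setting, $B$-side votes could in principle support popular matchings using edges outside $\{f(a), s(a)\}$, while conversely a matching built only from $f$- and $s$-edges might admit an alternating path that is heavy under $B$'s votes and so violate popularity. Proving both that (a) no fully popular matching uses any non-$f/s$ edge, and (b) every admissible matching in $G'$ is popular against $B$-preferences, requires a careful case analysis of alternating paths and cycles that are simultaneously heavy in $A$-votes and in total votes, exploiting the interaction of the two dual witnesses. Collapsing this joint condition into the clean $f/s$-reduction is the heart of the proof; once it is established, the linear time bound and the max-size guarantee follow by standard augmentation arguments on the sparse graph $G'$.
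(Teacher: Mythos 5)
Your starting point is sound and matches the paper's: a fully popular matching must use only the $f/s$-edges from the $A$-popularity characterization (and, in fact, only edges that are additionally \emph{popular}), and popularity itself is certified by an integral dual witness $\vec{\alpha}\in\{0,\pm1\}^n$. But there are two genuine gaps in the rest of the plan. First, your proposed verification step --- ``ruling out any popularity-violating alternating path or cycle \emph{within} $G'$'' --- cannot work as stated: in the popularity comparison the rival matching is free to use \emph{every} edge of $G$, not just $f/s$-edges, so the dual constraints $\alpha_a+\alpha_b\ge \wt_M(a,b)$ must be satisfied on all of $E$, and no local condition confined to the sparse graph $G'$ captures this. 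Second, and more seriously, there may be many $A$-popular matchings (all of which live in $G'$ and match all of $A$ and all $f$-posts), and your algorithm produces a single candidate via an adapted Gale--Shapley run and then verifies it. If that one candidate fails the popularity check, you have no argument that \emph{every} other $A$-popular matching also fails; ``standard augmentation arguments on $G'$'' do not give this, because popularity is a global property against rivals outside $G'$. This candidate-selection problem is exactly the heart of the difficulty, and your proposal leaves it unresolved.

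The paper's resolution is quite different in its machinery. It works in a graph $H$ consisting of \emph{two} copies of $G$ with parallel $+/-$ edges whose superscripts encode witness values; every fully popular matching $N$ with witness $\vec{\alpha}$ realizes as a \emph{legal} stable matching $N^*_\alpha$ of $H$ (legal meaning it avoids edges that are invalid or unpopular). The algorithm computes the $(A_L\cup B_L)$-optimal legal stable matching and then iterates: whenever a vertex $v$ is matched ``down'' in the upper half but ``up'' in the lower half, the lattice structure of stable matchings forces $\alpha_v=0$ for every witness of every fully popular matching, a parity lemma propagates $\alpha_u=0$ to $v$'s entire component of the popular subgraph, and the corresponding edges are forbidden and the stable matching recomputed. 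Emptiness of the resulting sublattice certifies non-existence; otherwise the partial symmetry between the two halves on the marked components is enough to stitch together an integral witness proving the extracted matching popular, and the $(A_L\cup B_L)$-optimality yields the max-size claim. Without some mechanism of this kind --- a way to prove that forced structure is shared by \emph{all} fully popular matchings --- your reduction to a single matching computation on $G'$ does not go through.
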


\subsection{Background and Related results}
The notion of popularity was proposed by G\"ardenfors~\cite{Gar75} in 1975. 
Algorithms in the domain of popular matchings were first studied in 2005 for {\em one-sided} preferences or 
the $A$-popular matching problem. 
Efficient algorithms were given in \cite{AIKM07} to decide if a given instance (with ties permitted in preferences) admits an 
$A$-popular matching or not; in particular, a linear time algorithm was given for the case with strict preferences. 

Algorithms for popular matchings in a marriage instance $G = (A \cup B,E)$ or {\em two-sided} preferences have been well-studied
in the last decade. The max-size popular matching algorithms in \cite{HK11,Kav14} compute special popular matchings called 
{\em dominant} matchings. A linear time algorithm for finding a popular matching with a given edge $e$ was given in \cite{CK16}
(such an edge is called a popular edge). It was shown in \cite{CK16} that if $e$ is a popular edge then there is either a
stable matching or a dominant matching with the edge $e$. 

Popular half-integral matchings in $G = (A \cup B, E)$ were characterized in \cite{Kav16} as stable matchings in a larger graph
related to $G$. The popular fractional matching polytope was analyzed in \cite{HK17} where the half-integrality of this polytope
was shown. Other than algorithms for min-size/max-size popular matchings and for the popular edge problem, no other 
polynomial-time algorithms were known for finding popular matchings with special properties.

To complete the picture, it was shown in \cite{FKPZ19} that it is NP-hard to decide if $G$ admits a popular matching that is
neither a min-size nor a max-size popular matching. A host of hardness results in~\cite{FKPZ19} painted a bleak picture for
efficient algorithms for popular matching problems (other than what is already known). For instance, it is NP-hard to find a
popular matching in $G$ with a given pair of edges. Thus finding a max-weight (resp., min-cost) popular matching is NP-hard 
when there are weights (resp., costs) on edges.

\subsection{Our Result and Techniques} 
It may be the case that no min-size or max-size popular matching in $G$ is $A$-popular, 
however $G$ admits a fully popular matching; Section~\ref{prelims} has such an example. 
As there are instances where it is NP-hard to decide if there exists a popular 
matching that is neither a min-size nor a max-size popular matching~\cite{FKPZ19}, 
a first guess may be that the fully popular matching problem is NP-hard.

Though an $A$-popular matching is constrained to use only some special edges in $G$ (see Theorem~\ref{thm:A-popular}),
this does not seem very helpful since it is NP-hard to solve the popular matching problem with forced edges~\cite{FKPZ19}.
Note that a rival matching 
is free to use any edge in $G$. It was not known if there was any tractable subclass of popular matchings 
other than the classes of {\em stable} matchings~\cite{GS62} and {\em dominant} matchings~\cite{CK16,HK11,Kav14}.

We show the set of fully popular matchings is a new tractable subclass of popular matchings. Unlike the classes of stable matchings and dominant matchings which are always non-empty, there need not exist a fully popular matching in $G$.
Our algorithm for finding a fully popular matching is based on the classical Gale-Shapley algorithm and works in a new graph $H$. This graph $H$
is essentially two copies of $G$ and is a variant of the graph seen in \cite{Kav16} to study popular {\em half-integral}
matchings. There is a natural map from the set of stable matchings in $H$ to the set of popular half-integral  matchings in $G$.
Our goal is to compute a stable matching with sufficient
symmetry in $H$ so that we can obtain a popular {\em integral} matching in $G$. 

We achieve this symmetry by using properties of both popular and $A$-popular matchings. These properties
allow us to identify certain edges that have to be excluded from our matching.
If there is no stable matching in $H$ without these edges then we use the lattice structure on stable matchings~\cite{GI89}
to show that $G$ has no fully popular matching. Else we obtain a matching $M$ in $G$ from this 
``partially symmetric'' stable matching in $H$. The most technical part of our analysis is to prove $M$'s popularity in $G$. 

\paragraph{Organization of the paper.} Section~\ref{prelims} discusses preliminaries on popular matchings and $A$-popular matchings.
Our algorithm is presented in Section~\ref{sec:algo} and its correctness is proved in Section~\ref{sec:correct}.

\section{Preliminaries}
\label{prelims}
Our input is a bipartite graph  $G = (A \cup B, E)$ where every vertex has a strict preference order on its neighbors.
Let us augment $G$ with self-loops, so each vertex is assumed to be at the bottom of its own preference list. Hence for any
vertex, being matched along a self-loop will be equivalent to what was originally the state of being left unmatched.
Thus popularity (resp., $A$-popularity) in the augmented instance is equivalent to 
popularity (resp., $A$-popularity) in the original instance.

We will first present the characterization of $A$-popular matchings---note that preferences of vertices 
in $B$ play no role here. For each $a \in A$, define the vertex $f(a)$ to be $a$'s top 
choice neighbor and let $s(a)$ be $a$'s most preferred neighbor that is nobody's top choice neighbor. 
We assume every $a \in A$ has at least one neighbor other than itself, so $f(a) \in B$, however it may be the case that 
$s(a) = a$. The following characterization of $A$-popular matchings was given in \cite{AIKM07}.
Let $E' = E \cup \{(u,u): u \in A \cup B\}$. 

\begin{theorem}[\cite{AIKM07}]
\label{thm:A-popular}
A matching $M$ in $G = (A \cup B, E')$ is $A$-popular if and only if:
\begin{enumerate}
\item $M \subseteq \{(a,f(a)), (a,s(a)): a \in A\}$. 
\item $M$ matches all in $A$ and all in $\{f(a): a \in A\}$.
\end{enumerate}
\end{theorem}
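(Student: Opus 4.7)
I plan to prove the two implications separately. For sufficiency ($\Leftarrow$), assume $M$ satisfies (1) and (2) and let $N$ be an arbitrary matching. I will construct an injection $\iota$ from $A_N := \{a \in A : N(a) \succ_a M(a)\}$ into $A_M := \{a \in A : M(a) \succ_a N(a)\}$, which gives $\phi_A(M,N) \ge \phi_A(N,M)$ at once. For $a \in A_N$, condition (1) together with $f(a)$ being $a$'s top choice forces $M(a) = s(a)$, and then $N(a) \succ_a s(a)$ combined with the maximality in the definition of $s(a)$ forces $N(a)$ to equal the top choice $f(a')$ of some $a' \in A$. Condition (2) guarantees $f(a')$ is matched in $M$; set $\iota(a) := M(f(a'))$. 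Applying (1) to $\iota(a)$, and observing that $f(a')$ is some agent's top choice (hence not an $s$-value of anyone), pins down $f(\iota(a)) = f(a')$, so $\iota(a)$ is matched in $M$ to its own top choice and loses it in $N$---hence $\iota(a) \in A_M$. Injectivity follows because $N$ is a matching.

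For necessity ($\Rightarrow$), assume $M$ is $A$-popular; I verify the two conditions in order. For condition (2): if some $f(a_0)$ were unmatched in $M$, the rival obtained by replacing $a_0$'s $M$-edge with $(a_0, f(a_0))$ changes only $a_0$'s assignment among vertices in $A$, and moves it strictly upward; this yields $\phi_A(N,M) > \phi_A(M,N)$, contradicting $A$-popularity. For condition (1): suppose some $a_0$ has $M(a_0) = b_0 \notin \{f(a_0), s(a_0)\}$. If $b_0 \succ_{a_0} s(a_0)$, the definition of $s$ forces $b_0 = f(a_1)$ for some $a_1 \ne a_0$, and the natural rival promotes $a_1$ to $f(a_1)$, displacing $a_0$; the other case, $s(a_0) \succ_{a_0} b_0$, is handled by directly promoting $a_0$ to $s(a_0)$. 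In either case I iterate: each displaced agent is promoted to its top $f$-target, which by condition (2) is guaranteed to be occupied in $M$, producing a further displacement. The resulting chain either closes into a cycle or terminates at a previously unmatched target, and in both subcases the overall rival yields a strict $A$-vote surplus.

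The principal obstacle is this chain construction in the necessity direction. A single swap generally only ties the $A$-vote tally, so the essence of the argument is tracking a chain of simultaneous promotions and arguing that the ``seed'' displacement at $a_0$ is never offset. The crucial enabler is condition (2), established first: it guarantees that every top-choice vertex is occupied in $M$, so each promotion step has a concrete target and the chain cannot stall inconclusively.
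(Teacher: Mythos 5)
A preliminary remark on scope: the paper states Theorem~\ref{thm:A-popular} only as a citation to \cite{AIKM07} and supplies no proof, so there is no in-paper argument to compare against; I am judging your proposal against the standard proof of this characterization. Your sufficiency direction is correct and is essentially that standard argument: for $a$ with $N(a) \succ_a M(a)$ conditions (1)--(2) force $M(a) = s(a)$, anything $a$ prefers to $s(a)$ is some $f(a')$, condition (2) makes $f(a')$ occupied in $M$, condition (1) forces its occupant to rank it first (so that occupant strictly loses under $N$), and injectivity follows because $N$ is a matching. That half is complete.

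The necessity direction has a genuine gap in the chain construction. First, the two termination scenarios you list are not the ones that can occur: since you establish condition (2) before attacking condition (1), every $f$-target is occupied in $M$, so the chain can never ``terminate at a previously unmatched target''; and when it ``closes into a cycle'' what actually happens is a collision --- the current displaced agent's top choice is a job already reassigned earlier in the chain, so two agents compete for one job and one of them must end up strictly worse off. You neither specify who is sacrificed nor verify that the vote tally remains strictly positive afterwards. Second, a displaced agent need not be promotable at all: the occupant of $f(a_0)$ may already hold its own top choice (in a popular matching it always does), in which case evicting it can only hurt it, and ``each displaced agent is promoted to its top $f$-target'' fails at the very first link. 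The repair is to make the chain short rather than long. Prove first the intermediate fact that every $f$-post is matched in $M$ to an agent that ranks it first: if $p = f(a)$ is held by $a'$ with $f(a') = p' \neq p$, move $a$ to $p$, move $a'$ to $p'$, and leave the previous occupant of $p'$ (if it exists and differs from $a$) unmatched --- two agents reach their top choices, at most one is harmed, net $+1$, contradicting $A$-popularity. With this in hand your subcase $b_0 \succ_{a_0} s(a_0)$ is outright impossible with no rival needed ($b_0$ is then an $f$-post, so its occupant $a_0$ must rank it first, contradicting $b_0 \neq f(a_0)$), and the subcase $s(a_0) \succ_{a_0} b_0$ needs only the three-agent rival in which $a_0$ moves to $s(a_0)$, the evicted $a_1$ moves to $f(a_1)$, and the occupant of $f(a_1)$ is evicted and left unmatched; again the count is $+2-1=+1$. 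No unbounded chain and no cycle analysis is required.
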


Thus any $A$-popular matching $M$ has to match every $a \in A$ to either $f(a)$ or $s(a)$. Furthermore, any job $b \in B$ that is some agent's
top choice neighbor has to be matched in $M$ to an agent $a \in A$ such that $b = f(a)$, i.e., $a$ regards $b$ as its top choice neighbor.

\paragraph{\bf Popular matchings.} We will use an LP-based characterization of popular matchings~\cite{Kav16,KMN09} in 
a marriage instance $G$. Recall that we augmented the edge set
$E$ with self-loops. It will be convenient to view any matching in the original instance as a perfect matching in the augmented
instance $G = (A \cup B, E')$ by using self-loops to match all the vertices originally left unmatched.

Let $M$ be any perfect matching in~$G= (A \cup B, E')$. For any vertex $u$, let $M(u)$ be $u$'s partner in $M$. 
For any pair of adjacent vertices $u$ and $v$, let $u$'s vote for $v$ versus $M(u)$ be $1$ if $u$ prefers $v$ to $M(u)$, 
it is $-1$ if $u$ prefers $M(u)$ to $v$, else it is 0 (in this case $M(u) = v$). 
In order to check if $M$ is popular or not in $G$, the following 
edge weight function $\wt_M$ will be useful. Note that $\wt_M(a,b)$ is the sum of votes of $a$ and $b$ for each other
versus their respective assignments in $M$. 

For any $(a,b) \in E$:

\begin{equation*} 
\wt_M(a,b) = \begin{cases} 2   & \text{if\ $(a,b)$\ is\ a\ blocking\ edge\ to\ $M$;}\\
	                     -2 &  \text{if\ both\ $a$\ and\ $b$\ prefer\ their\ partners\ in\ $M$\ to\ each\ other;}\\			
                              0 & \text{otherwise.}
\end{cases}
\end{equation*}

Thus $\wt_M(a,b) = 0$ for every $(a,b) \in M$.
We need to define $\wt_M$ for self-loops as well. For any $u \in A \cup B$:
\begin{equation*} 
\wt_M(u,u) = \begin{cases} 0   & \text{if\ $(u,u) \in M$;}\\
                              -1 & \text{otherwise.}
\end{cases}
\end{equation*}

For any perfect matching $N$ in~$G$, observe that $\wt_M(N) = \sum_{e\in N}\wt_M(e) = \phi(N,M) - \phi(M,N)$.
Thus $M$ is popular if and only if $\wt_M(N) \le 0$ for every perfect matching $N$ in $G$.

Consider the max-weight perfect matching LP in $G$ with the edge weight function $\wt_M$. 
This linear program is (LP1) given below and (LP2) is the dual of (LP1).
The variables $x_e$ for $e \in E'$ are primal variables and the variables $y_u$ for $u \in A \cup B$ are dual variables.
Here $\delta'(u) = \delta(u) \cup \{(u,u)\}$.

\begin{table}[ht]
\begin{minipage}[b]{0.45\linewidth}\centering
\begin{eqnarray*}
 \max \sum_{e \in E'} \wt_M(e)\cdot x_e  &\mbox{\hspace*{0.1in}}\text{(LP1)}\\
      \text{s.t.}\qquad\sum_{e \in \delta'(u)}x_e = 1  &\mbox{\hspace*{0.1in}}\forall\, u \in A \cup B\\
                        x_e  \ge 0   &\mbox{\hspace*{-0.05in}}\forall\, e \in E'. 
\end{eqnarray*}
\end{minipage}
\hspace{0.6cm}
\begin{minipage}[b]{0.45\linewidth}\centering
  \begin{eqnarray*}
\min \sum_{u \in A \cup B}y_u  &\mbox{\hspace*{0.1in}}\text{(LP2)}\\
       \text{s.t.}\qquad y_{a} + y_{b} \ge \wt_{M}(a,b)  &\mbox{\hspace*{0.1in}}\forall\, (a,b)\in E\\
                          y_u \ge \wt_M(u,u) &\mbox{\hspace*{0.19in}}\forall\, u \in A \cup B.\\
\end{eqnarray*}
\end{minipage}
\end{table}

$M$ is popular if and only if the optimal value of (LP1) is at most 0. In fact, the optimal value is exactly~0
since $M$ is a perfect matching in $G$ and $\wt_M(M) = 0$.
Thus $M$ is popular if and only if the optimal value of (LP2) is 0 (by LP-duality). 

\begin{theorem}[\cite{Kav16,KMN09}]
\label{thm:witness}
A matching $M$ in $G = (A \cup B, E')$ is popular if and only if there exists $\vec{\alpha} \in \{0, \pm 1\}^n$ (where $|A\cup B| = n$) 
such that $\sum_{u \in A \cup B}\alpha_u = 0$ along with 
\[\alpha_{a} + \alpha_{b} \ge \wt_M(a,b) \ \ \ \forall (a,b) \in E \ \ \ \ \ \ \ \ \ \text{and}\ \ \ \ \ \ \ \ \ \alpha_u \ge \wt_M(u,u) \ \ \ \forall u \in A \cup B.\]
\end{theorem}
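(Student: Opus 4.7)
\textbf{Proof plan for Theorem \ref{thm:witness}.}
The statement is an LP-duality based certificate, and the setup in the preceding paragraphs already gives us (LP1) and its dual (LP2). The plan is to bootstrap these into a $\{0,\pm 1\}$-valued witness in two directions.

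\emph{If direction (easy).} Suppose $\vec\alpha \in \{0,\pm 1\}^n$ with $\sum_u \alpha_u = 0$ satisfies the two families of inequalities. Then $\vec\alpha$ is a feasible solution to (LP2) with objective value $0$. By weak LP-duality, the optimum of (LP1) is at most $0$. For every perfect matching $N$ in the augmented instance, the indicator vector $\chi_N$ is feasible for (LP1) with value $\wt_M(N) = \phi(N,M) - \phi(M,N)$. Therefore $\phi(N,M) \le \phi(M,N)$ for all perfect $N$, i.e.\ $M$ is popular.

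\emph{Only-if direction (hard).} Assume $M$ is popular. Since $\wt_M(M) = 0$, the perfect matching $M$ witnesses that (LP1) has optimum exactly $0$. By strong LP-duality, (LP2) admits an optimal $\vec{y}^\ast$ with $\sum_u y^\ast_u = 0$. Two further steps are needed to upgrade $\vec{y}^\ast$ to a $\{0,\pm 1\}$-valued vector.
\smallskip

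\noindent \emph{Step 1 (integrality).} Replace each self-loop $(u,u)$ by an edge to a fresh dummy copy $u'$ of $u$, and place $u'$ on the opposite side of $u$. The constraint matrix of the resulting perfect matching LP is the incidence matrix of a bipartite graph, hence totally unimodular. Because $\wt_M$ is integer-valued, the dual polyhedron has an integer optimum, and projecting back to the $y$-coordinates gives an integer optimal solution to (LP2).
\smallskip

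\noindent \emph{Step 2 (range $\{-1,0,1\}$).} The edge weights lie in $\{-2,0,2\}$ and the self-loop weights lie in $\{-1,0\}$. I would show that any integer optimal $\vec{y}^\ast$ can be trimmed, without increasing $\sum_u y^\ast_u$ and without violating any dual constraint, so that $y^\ast_u \in \{-1,0,1\}$ for every $u$. The trimming is guided by complementary slackness with $M$: since $M$ is an optimal primal solution, every $(a,b) \in M$ must be tight ($y^\ast_a + y^\ast_b = 0$), and every self-loop $(u,u)\in M$ must satisfy $y^\ast_u = 0$. Starting from a vertex $u$ with $|y^\ast_u| \ge 2$, one follows the $M$-alternating structure out of $u$: at each step complementary slackness forces the neighbouring dual value, and the bound $|\wt_M| \le 2$ on edges forces the alternating sequence of dual values to stay within a symmetric window. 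An excess above $1$ (or below $-1$) at $u$ then propagates to a symmetric excess somewhere else that can be cancelled in pairs, using that $\sum_u y^\ast_u = 0$.
\smallskip

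\emph{Where I expect the real work to lie.} Step~1 is essentially bookkeeping around the self-loops. The main obstacle is Step~2: converting an integer dual into a $\{0,\pm 1\}$ one. The cleanest way I see to run that argument is combinatorially on the graph $M \oplus N$ for a ``maximally opposing'' perfect matching $N$ achieving $\wt_M(N) = 0$: the components of $M\oplus N$ are $M$-alternating cycles and paths, and along each such component one can define $\alpha_u \in \{0,\pm 1\}$ by recording, at each vertex, whether it prefers $M$, prefers $N$, or is indifferent. Verifying the two inequality families for this definition reduces to a case analysis on edges within and across components, while $\sum_u \alpha_u = 0$ follows from $\wt_M(N) = 0$. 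This gives an explicit construction of the witness and sidesteps the need to manipulate arbitrary integer duals.
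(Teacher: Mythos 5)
Your overall route---strong LP-duality to get an optimal dual of value $0$, total unimodularity to make it integral, then a final step to force the range $\{0,\pm 1\}$---is exactly the paper's route, and your ``if'' direction and Step~1 are fine (the paper simply notes that a self-loop contributes a unit column, so appending self-loops preserves total unimodularity; your dummy-vertex reformulation achieves the same thing). The genuine gap is Step~2, which you leave as a plan and, more importantly, misdiagnose as the hard part. There is no trimming to be done and no alternating-path propagation is needed: the two facts you already wrote down in your own sketch close the argument immediately. Every integral feasible dual satisfies $y^\ast_u \ge \wt_M(u,u) \ge -1$ for \emph{all} $u$ (this is the self-loop constraint of (LP2), and it is why the instance was augmented with self-loops in the first place). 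Since $M$ is a \emph{perfect} matching of the augmented instance, every vertex $u$ is covered either by an edge $(u,v)\in M$ or by $(u,u)\in M$. In the first case complementary slackness gives $y^\ast_u=-y^\ast_v\le 1$ because $y^\ast_v\ge -1$; in the second case it gives $y^\ast_u=\wt_M(u,u)=0$. Hence \emph{any} integral optimal dual already lies in $\{0,\pm 1\}^n$, and the set of vertices with $|y^\ast_u|\ge 2$ that your trimming procedure would start from is empty.

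So as written the proof is incomplete---``I would show that any integer optimal $\vec{y}^\ast$ can be trimmed'' is not a proof, and the fallback construction on $M\oplus N$ for a maximally opposing $N$ is a genuinely different and substantially heavier argument that you also do not carry out (and whose case analysis is nontrivial to verify for edges joining different components of $M\oplus N$). Replace all of Step~2 with the two-line complementary-slackness observation above and the proof matches the paper's.
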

\begin{proof}
The constraint matrix of (LP2) is totally unimodular. This is because $E$ is the edge set of a bipartite graph and adding self-loops 
preserves the total unimodularity of the constraint matrix. So (LP2) admits an optimal solution 
that is integral. Let $\vec{\alpha}$ be an integral optimal solution of (LP2). Hence 
$\vec{\alpha} \in \mathbb{Z}^n$. We need to show that $\vec{\alpha} \in \{0, \pm 1\}^n$.

We have $\alpha_u \ge \wt_M(u,u) \ge -1$ for all $u \in A \cup B$. 
Since $M$ is an optimal solution to (LP1), complementary slackness implies that
$\alpha_u + \alpha_v = \wt_M(u,v) = 0$ for each edge $(u,v) \in M$. 
Thus $\alpha_u = -\alpha_v \le 1$ for every vertex $u$ matched to a non-trivial neighbor $v$ in $M$. 

Regarding any vertex $u$ such that $(u,u) \in M$, we again have by complementary slackness $\alpha_u = \wt_M(u,u) = 0$.
Hence $\vec{\alpha} \in \{0, \pm 1\}^n$. \qed
\end{proof}

For any popular matching $M$, a vector $\vec{\alpha}$ as given in Theorem~\ref{thm:witness} will 
be called a {\em witness} of $M$'s popularity. A popular matching may have several witnesses. 
A stable matching $S$ in $G$ has $\vec{\alpha} = 0^n$ as a witness since $\wt_S(e) \le 0$ for all $e \in E'$.

\paragraph{\bf An interesting example.}
Recall that our problem is to compute a {\em fully popular} matching, i.e., a popular matching that is also $A$-popular. 
It is easy to construct instances that admit $A$-popular matchings but admit no fully popular matching.
It could also be the case that no min-size or max-size popular matching in $G = (A \cup B, E)$ is $A$-popular, however $G$ has a fully 
popular matching. Consider the instance $G$ given in Fig.~\ref{fig:neither}. Vertex preferences are indicated on edges: 1 denotes top choice,
2 denotes second choice, and so on.

\begin{figure}[ht]
\centerline{\resizebox{0.75\textwidth}{!}{\input{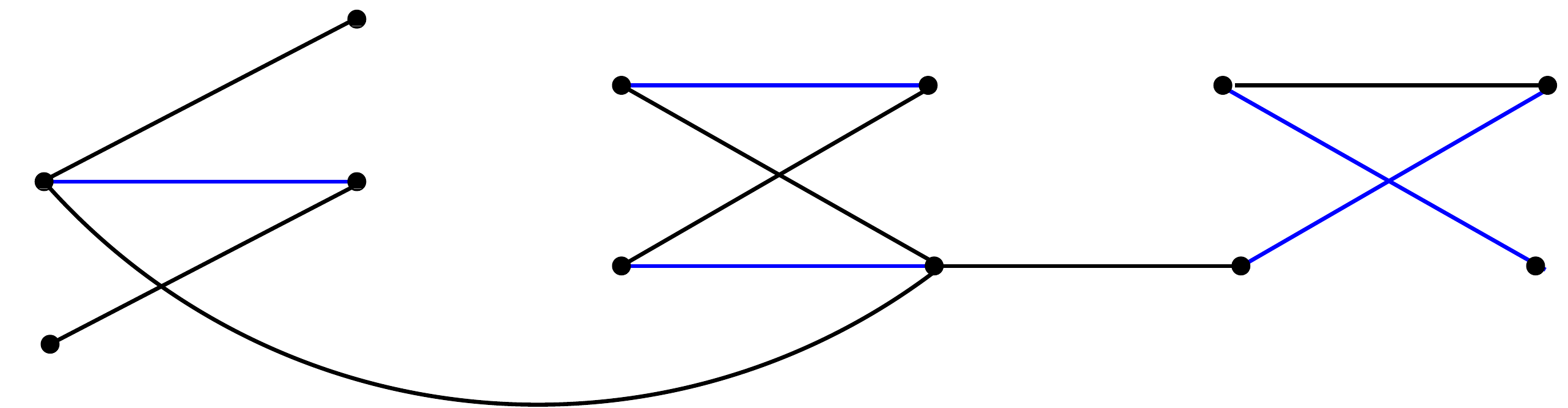_t}}}
\caption{An instance on $A =\{a,a',p,p',x,x'\}$ and $B = \{b,b',q,q',y,y'\}$ where no min-size/max-size popular matching is $A$-popular. There is a fully popular matching (on blue edges) here.}
\label{fig:neither}
\end{figure}

We list the vertices $f(u)$ and $s(u)$ for each $u \in A$ in this instance.
Observe that the vertex $b'$ is {\em not} $s(a)$ since $a$ prefers $q'$ to $b'$ and $q' \ne f(u)$ for any $u \in A$.
\begin{itemize}
\item We have $f(a) = f(a') = b$, $f(p) = f(p') = q$, and $f(x) = f(x') = y$.
\item We have $s(a) = s(p) = s(p') = s(x') = q'$, $s(x) = y'$, and $s(a') = a'$.
\end{itemize}
Since $s(u) \ne u$ for $u \in \{a, p, p', x, x'\}$, any $A$-popular matching $M$ has to match these 
5 vertices to neighbors in $B$~(by Theorem~\ref{thm:A-popular}). So $q'$, which is $s(a)$, has to be matched to one of $p, p'$. Thus 
$M(a) = f(a) = b$ which implies that
$M(a') = s(a') = a'$. That is, after pruning self-loops from $M$, the vertex $a'$ has to be left unmatched in $M$. So $M$ has size~5.

The matching $S = \{(a,b),(p,q),(p',q'),(x,y)\}$ is stable. Thus any min-size popular matching in $G$ has size~4.
The perfect matching $M_{\max} = \{(a,b'),(a',b), (p,q),(p',q'),(x,y'),(x',y)\}$
is popular, so any max-size popular matching in $G$ has size 6. Thus no min-size or max-size popular matching in $G$ 
can be $A$-popular. Interestingly, this instance admits a fully popular matching; it is easy to check that the matching 
$M = \{(a,b), (p,q),(p',q'),(x,y'),(x',y)\}$ is both popular and $A$-popular.

\section{Fully Popular Matchings}
\label{sec:algo}

We are given a marriage instance $G = (A \cup B, E)$. Recall that we augmented $G$ with self-loops. 
So henceforth $G = (A \cup B, E')$ where
$E' = E \cup \{(u,u): u \in A \cup B\}$. Our algorithm will work in a bipartite graph $H$ which is essentially two copies of the 
graph $G$ as shown in Fig.~\ref{fig:H}. The vertex set of $H$ is $A_L \cup B_L$ on the left and $B_R \cup A_R$ on the right.
Here $A_L = \{a_{\ell}: a \in A\}$ and $A_R = \{a_r: a \in A\}$. Similarly, $B_L = \{b_{\ell}: b \in B\}$ and $B_R = \{b_r: b \in B\}$.

\begin{figure}[ht]
\centerline{\resizebox{0.42\textwidth}{!}{\input{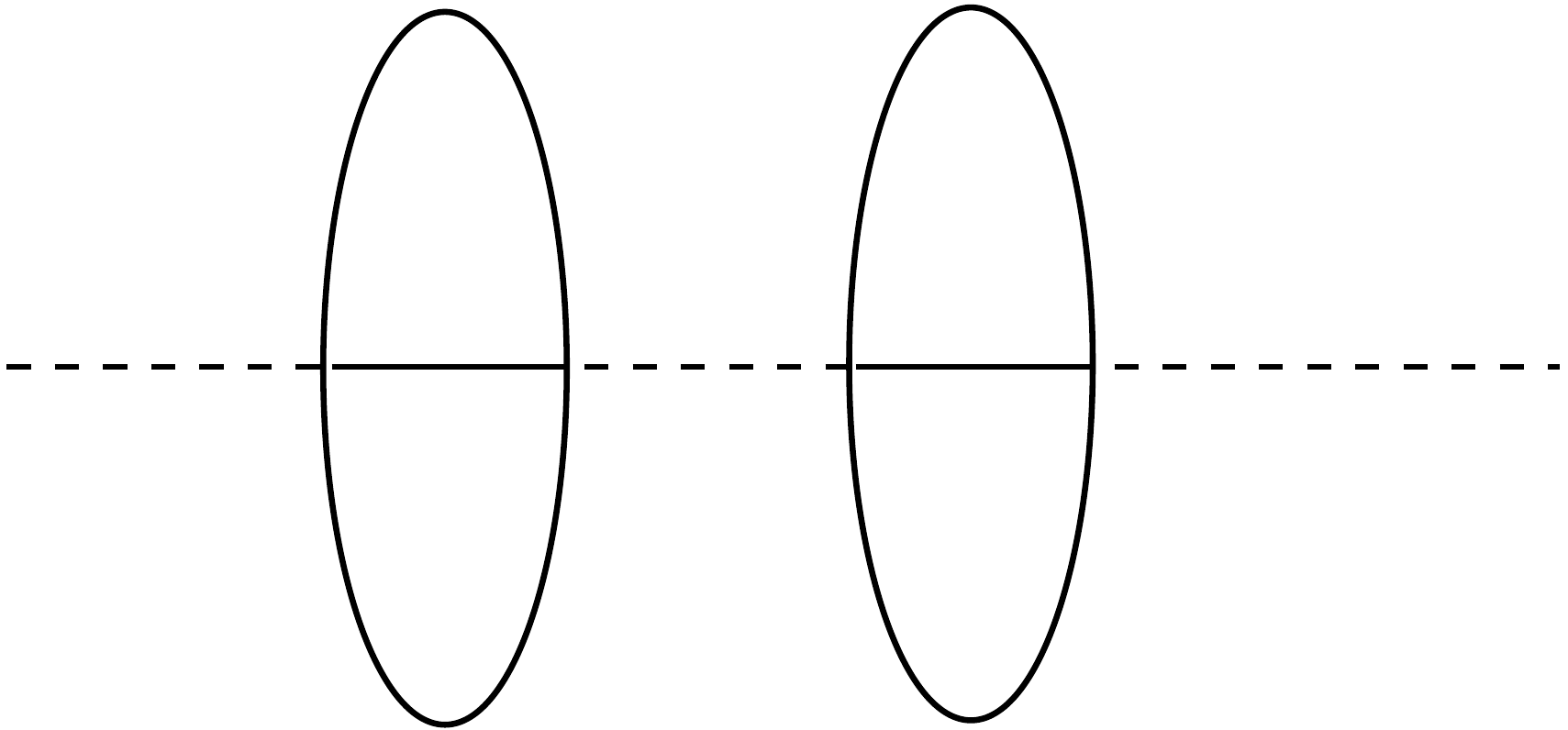_t}}}
\caption{The bipartite graph $H$ consists of two copies of the graph $G = (A\cup B,E)$.}
\label{fig:H}
\end{figure}

The {\em upper half} of $H$ consists of the set $A_L$ of agents on the left and the set $B_R$ of jobs on the right 
while the {\em lower half} of $H$ consists of the set $B_L$ of jobs on the left and the set $A_R$ of agents 
on the right. Thus every vertex $u \in A \cup B$ has two copies in $H$: one as $u_{\ell}$ on the left of $H$ and
another as $u_r$ on the right of $H$.

For every edge in $E$, there will be four edges in $H$: a pair of parallel edges in the upper half and a pair of parallel edges in the lower half. In order to distinguish two parallel edges with the same endpoints, we use superscripts $+$ and $-$ on the endpoints.
For any $(a,b) \in E$:
\begin{itemize}
\item in the upper half, there are two parallel edges $(a_{\ell}^+,b_r^-)$ and $(a_{\ell}^-,b_r^+)$ between $a_{\ell}$ and $b_r$;
\item in the lower half, there are two parallel edges $(b_{\ell}^+,a_r^-)$ and $(b_{\ell}^-,a_r^+)$ between $b_{\ell}$ and $a_r$. 
\end{itemize}

\smallskip

\noindent{\em Remark.}
A pair of parallel edges $(u^+,v^-)$ and $(u^-,v^+)$ can be visualized as a bidirected pair $u \rightarrow v$ and $u \leftarrow v$.
We prefer to use superscripts instead of directions since these $+/-$ superscripts will be related to {\em witnesses} 
of popular matchings (see Lemma~\ref{lem:main}).

\medskip

Corresponding to every self-loop $(u,u)$, there is a {\em single} edge $(u_{\ell}^-,u_r^+)$ in $H$.
For convenience, we have used $+/-$ superscripts on the endpoints of this edge also. 
These edges $(u_{\ell}^-,u_r^+)$ for all $u \in A \cup B$ are the only edges in $H$ that go across the two halves of $H$.

Vertices in $H$ have preferences on their incident {\em edges} rather than on their neighbors. However it would be more
convenient to say $u$ prefers $v^-$ to $w^+$ rather than say  $u$ prefers $(u^+,v^-)$ to $(u^-,w^+)$.
In fact, $H$ is equivalent to a conventional graph $H^*$ (with preferences on neighbors) that was used 
to study popular {\em half-integral} matchings in \cite{Kav16}: there were 4 vertices in $H^*$ for each $u \in A \cup B$.
The graph $H$ is a sparser version of $H^*$ with only 2 vertices $u_{\ell}$ and $u_r$ for each $u \in A \cup B$ and a pair 
of parallel edges between every pair of adjacent vertices. We now describe the preferences of vertices in $H$.

\smallskip

\noindent{\bf Preferences in $H$.} 
Every vertex prefers superscript~$-$ neighbors to 
superscript~$+$ neighbors; among superscript~$-$ neighbors (similarly, superscript~$+$ neighbors), it will be its original preference order.
Consider any vertex $u \in A \cup B$. Suppose $u$'s preference order in the original instance $G = (A \cup B, E)$ is 
$v \succ v' \succ \cdots \succ v''$, i.e., $u$'s top choice is $v$, second choice is $v'$, and so on.
In $H$, the preference order of $u_{\ell}$ is as follows:
\[ \underbrace{v^-_r \succ v'^-_{r} \succ \cdots \succ v''^-_{r}}_{\text{superscript\ $-$\ neighbors}} \succ  \underbrace{v^+_{r} \succ v'^+_{r} \succ \cdots \succ v''^+_{r} \succ u^+_r}_{\text{superscript\ $+$\ neighbors}},\]
where $v_{r}, v'_r,\ldots$ correspond to the copies of $v, v', \ldots$ on the right side of $H$.

Observe that the vertex $u^+_r$ is the last choice of $u_{\ell}$. In $H$, the preference order of $u_r$ is as follows:
\[ \underbrace{v^-_{\ell} \succ v'^-_{\ell} \succ \cdots \succ v''^-_{\ell} \succ u^-_{\ell}}_{\text{superscript\ $-$\ neighbors}} \succ \underbrace{v^+_{\ell} \succ v'^+_{\ell} \succ \cdots \succ v''^+_{\ell}}_{\text{superscript\ $+$\ neighbors}},\]
where $v_{\ell}, v'_{\ell},\ldots$ correspond to the copies of $v, v', \ldots$ on the left side of $H$.
This is analogous to $u_{\ell}$'s preference order---the main difference is in the position of its {\em twin}---note that 
the vertex $u_r$ prefers $u^-_{\ell}$ to all its superscript $+$ neighbors.

\paragraph{\bf Blocking edges.}
For any matching $M$ in $H$, we say an edge $(u_{\ell}^+,v_r^-)$, where $u_{\ell} \in A_{\ell} \cup B_{\ell}$ 
and $v_r \in A_r \cup B_r$, {\em blocks} $M$ if the following two conditions hold:
\begin{enumerate}
    \item $u_{\ell}$ prefers $v_r^-$ to its assignment in $M$ and
    \item $v_r$ prefers $u_{\ell}^+$ to its assignment in $M$. 
\end{enumerate}

Similarly, we say $(u_{\ell}^-,v_r^+)$ {\em blocks} $M$ if $u_{\ell}$ prefers $v_r^+$ to its assignment in $M$ and $v_r$ prefers $u_{\ell}^-$ to 
its assignment in $M$.

\begin{definition}
  A matching $M$ in $H$ is {\em stable} if no edge in $H$ blocks $M$.
\end{definition}

For any perfect matching $S$ in $G = (A \cup B, E')$, there is a corresponding matching $S'$ in $H$ where
$S' = \{(a^-_{\ell},b^+_r),(b^-_{\ell},a^+_r): (a,b) \in S\cap E\} \cup \{(u^-_{\ell},u^+_r): (u,u) \in S\}$. 
So $S'$ is a perfect matching in $H$. The following claim will be useful to us.

\begin{new-claim}
\label{clm6}
If $S$ is stable in $G$ then $S'$ is stable in $H$. 
\end{new-claim}
\begin{proof}
We need to show that no edge in $H$ blocks $S'$. Consider any edge $(a^+_{\ell},b^-_r)$ in $H$ where $a \in A$ and $b \in B$. 
By the definition of the matching $S'$, some edge $(\ast,b^+_r) \in S'$.
Since every vertex prefers superscript $-$ neighbors to 
superscript~$+$ neighbors, the vertex $b_r$ prefers its partner in $S'$ to $a^+_{\ell}$. 
Thus the edge $(a^+_{\ell},b^-_r)$ does not block $S'$.

So consider any edge $(a^-_{\ell},b^+_r)$ in $H$. 
If $(a,b) \in S$ then $(a^-_{\ell},b^+_r) \in S'$ and so it does not block $S'$.
If $(a,b) \notin S$ then it follows from the stability of $S$ in $G$ that:
\begin{enumerate}
    \item either $a$ is matched to a neighbor $d$ preferred to $b$
    \item or $b$ is matched to a neighbor $c$ preferred to~$a$.
\end{enumerate}
In the first case, $a_{\ell}$ is matched in $S'$ to a neighbor $d^+_r$ preferred to $b^+_r$; 
in the second case, $b_r$ is matched in $S'$ to a neighbor $c^-_{\ell}$ preferred to $a^-_{\ell}$. Thus $(a^-_{\ell},b^+_r)$ does not block $S'$. 

It can analogously be shown that neither $(b^-_{\ell},a^+_r)$ nor $(b^+_{\ell},a^-_r)$ blocks $S'$. 
Also $(u^-_{\ell},u^+_r)$ for any $u \in A \cup B$ does not block $S'$ since $u^+_r$ is
$u_{\ell}$'s least preferred neighbor in $H$. Hence $S'$ is a stable matching in $H$. \qed
\end{proof}

Thus the graph $H$ admits a perfect stable matching. Since all stable matchings in $H$ have the same size~\cite{GS85}, 
every stable matching in $H$ has to be perfect.
We seek to compute a ``special'' stable matching in $H$: one that has no edge that is {\em forbidden}.
The edges that will be marked forbidden are those that no
fully popular matching can use. The definition of {\em valid} edges given below is as given in 
Theorem~\ref{thm:A-popular}: any $A$-popular matching in $G$ has to contain only these edges/self-loops.

\begin{definition}
\label{def:valid}
Edges/self-loops in $\{(a,f(a)), (a,s(a)): a \in A\}$ are {\em valid}. So are self-loops in 
$\{(b,b): b \ne f(a)\ \mathrm{for\ any}\ a \in A\}$. All other edges and self-loops are {\em invalid}. 
\end{definition}

Thus every $a \in A$ has exactly two valid edges incident to it: one of these may be the self-loop $(a,a)$.
Note that a job $b \in B$ may have several valid edges incident to it.

An edge $e$ in $G$ is called {\em popular} if there exists a popular matching $M$ in $G$ that contains~$e$. 
Similarly, a vertex is called {\em stable} if it is matched in some (equivalently, every~\cite{GS85}) stable matching.
It is known that every popular matching in $G$ has to match all stable vertices 
to {\em genuine} neighbors~\cite{HK11}. So the self-loop $(u,u)$ is popular if and only if $u$ is unstable.

\begin{definition}
\label{def:legal-edge}
Call an edge $e$ in $E' = E \cup \{(u,u): u \in A \cup B\}$ {\em legal} if $e$ is valid and popular.
\end{definition}

\paragraph{\bf Forbidden edges.}
A fully popular matching, by definition, has to contain only legal edges.
So if $(a,b) \in E$ is not legal then $(a_{\ell}^+,b_r^-)$, $(a_{\ell}^-,b_r^+)$, $(b_{\ell}^+,a_r^-)$, and 
$(b_{\ell}^-,a_r^+)$ are {\em forbidden} edges in the stable matching that we seek to compute in $H$.
Similarly, for any $u \in A \cup B$,
if $(u,u)$ is not legal then $(u_{\ell}^-,u_r^+)$ is a forbidden edge in our matching.

\begin{definition}
\label{def:legal}
A matching $M$ in $H$ is {\em legal} if $M$ has no forbidden edge.
\end{definition}

\paragraph{\bf Symmetric matchings.}
Call a matching $M$ in $H$ {\em symmetric} if for each edge $(a,b)$ in $E$, either both $(a_{\ell},b_r)$ and $(b_{\ell},a_r)$ are 
in $M$ or neither is in $M$. For convenience, we are not mentioning the $+/-$ superscripts on $a_{\ell},a_r,b_{\ell},b_r$. Loosely speaking, a symmetric matching $M$ has the same edges in the upper and lower 
halves of $H$. A symmetric matching $M$ in $H$ will be called a {\em realization} of 
$\tilde{M} = \{(a,b): (a_{\ell},b_r)\ \text{and}\ (b_{\ell},a_r)\ \text{are\ in}\ M\}$. 
Note that $\tilde{M}$ is a matching in $G$.

\begin{lemma}
\label{lem:main}
Let $N$ be any fully popular matching in $G$. Then there exists a legal stable matching in~$H$
that is a realization of $N$.
\end{lemma}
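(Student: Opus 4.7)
The plan is to derive the desired matching $M$ in $H$ from an LP-duality witness of $N$'s popularity. Since $N$ is popular, Theorem~\ref{thm:witness} supplies a vector $\vec{\alpha} \in \{0,\pm 1\}^n$ with $\sum_u \alpha_u = 0$ and $\alpha_u+\alpha_v \ge \wt_N(u,v)$ for every edge; complementary slackness pins $\alpha_u+\alpha_v=0$ on each $(u,v)\in N$, and $\alpha_u=0$ whenever $(u,u)\in N$. I will use $\vec{\alpha}$ to choose the $\pm$ superscripts in $H$ according to the single rule: $u_\ell$ gets superscript $+$ iff $\alpha_u=1$ (else $-$), and $u_r$ gets superscript $-$ iff $\alpha_u=-1$ (else $+$).

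Concretely, for each non-loop $(a,b)\in N$ I would add both the upper-half edge on $\{a_\ell,b_r\}$ and the lower-half edge on $\{b_\ell,a_r\}$, with superscripts determined by the rule above. This yields $(a_\ell^+,b_r^-)$ paired with $(b_\ell^-,a_r^+)$ when $(\alpha_a,\alpha_b)=(1,-1)$, $(a_\ell^-,b_r^+)$ paired with $(b_\ell^+,a_r^-)$ when $(\alpha_a,\alpha_b)=(-1,1)$, and $(a_\ell^-,b_r^+)$ paired with $(b_\ell^-,a_r^+)$ when $\alpha_a=\alpha_b=0$. For each self-loop $(u,u)\in N$ I include the crossing edge $(u_\ell^-,u_r^+)$. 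By construction $M$ is a perfect symmetric matching in $H$ with $\tilde M=N$, so $M$ realizes $N$. Legality is immediate: $N$ is $A$-popular, so Theorem~\ref{thm:A-popular} forces every non-loop edge of $N$ into $\{(a,f(a)),(a,s(a)):a\in A\}$ and every $N$-self-loop into the valid set; each such edge is popular because $N$ is popular, hence legal, so no $M$-edge is forbidden.

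The main step is to show $M$ is stable in $H$. I would assume for contradiction that some edge $(u_\ell^\sigma,v_r^\tau)$ blocks $M$ and case-split on the four patterns of $(\sigma,\tau)$ together with the upper/lower-half placement. The superscript rule translates each blocker condition into constraints on $\alpha_u,\alpha_v$, on the $\alpha$-values of the $N$-partners of $u$ and $v$, and on how $u$ and $v$ vote for each other in the $N$-versus-$(u,v)$ comparison; the goal in each case is to derive $\wt_N(u,v)>\alpha_u+\alpha_v$, contradicting a witness inequality. For example, a blocker $(a_\ell^+,b_r^-)$ forces $b_r$'s $M$-partner to carry superscript $+$, so $\alpha_{N(b)}=1$, whence $\alpha_b=-1$ and $b$ votes $+1$ for $a$. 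A sub-case analysis on $a_\ell$'s $M$-partner (matched to a self-loop, to a $-1$-partner, or to a $\{0,+1\}$-partner) either forces $a$ to vote $+1$ and hence $\wt_N(a,b)=2>\alpha_a+\alpha_b$, or directly violates the witness inequality on the $a$-side. The other three $(\sigma,\tau)$ patterns are handled in the same spirit, and the lower-half analysis mirrors the upper-half one. Crossing edges $(u_\ell^-,u_r^+)$ never block because $u_r^+$ is $u_\ell$'s last choice in $H$.

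The principal obstacle is the bookkeeping in this stability case analysis: one has to carefully track, across several subcases (including the boundary scenarios where $u$ or $v$ is matched to a self-loop in $N$), how the blocker conditions force specific $\alpha$-configurations and specific $M$-versus-$N$ vote patterns that together violate a witness inequality. All other parts of the argument — perfectness, symmetry, realization of $N$, and legality — reduce to routine verifications using Theorems~\ref{thm:A-popular} and~\ref{thm:witness}.
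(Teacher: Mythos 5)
Your proposal is correct and follows essentially the same route as the paper: it builds the realization $N^*_{\alpha}$ from a popularity witness $\vec{\alpha}$ of $N$ (your superscript rule produces exactly the paper's three cases $(A_1\times B_{-1})$, $(A_{-1}\times B_1)$, $(A_0\times B_0)$ plus the crossing edges for self-loops), gets legality from validity and popularity of $N$'s edges, and proves stability by translating a hypothetical blocking edge into a violated witness inequality $\wt_N(a,b)\le\alpha_a+\alpha_b$ — the contrapositive of the paper's case analysis on $(\alpha_a,\alpha_b)$. The one worked blocking case you give is correct and the remaining patterns are genuinely analogous, matching the level of detail in the paper's own proof of its Claim~\ref{clm3}.
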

\begin{proof}
Let $N$ be a fully popular matching  in $G$ and let $\vec{\alpha}\in\{0, \pm 1\}^n$ be a witness of $N$'s popularity (see Theorem~\ref{thm:witness}). 
For $i \in \{0, \pm 1\}$, let $A_i$ be the set of vertices $a\in A$ with $\alpha_a = i$ and let $B_i$ be the set of vertices $b\in B$ 
with $\alpha_b = i$. Thus we have $A = A_0 \cup A_1 \cup A_{-1}$ and $B = B_0 \cup B_1 \cup B_{-1}$.

We have $\alpha_a + \alpha_b = \wt_N(a,b) =0$ for each edge $(a,b) \in N$: this is by complementary slackness on the linear program that is
analogous to (LP2) (so $\wt_N$ replaces $\wt_M$ in this LP). 
Since $\alpha_a = -\alpha_b$ for every $(a,b) \in N$, we have $N\cap E \subseteq (A_0 \times B_0) \cup (A_{-1} \times B_1) \cup (A_1 \times B_{-1})$
(see Fig.~\ref{fig:first}).

\begin{figure}[ht]
\centerline{\resizebox{0.55\textwidth}{!}{\input{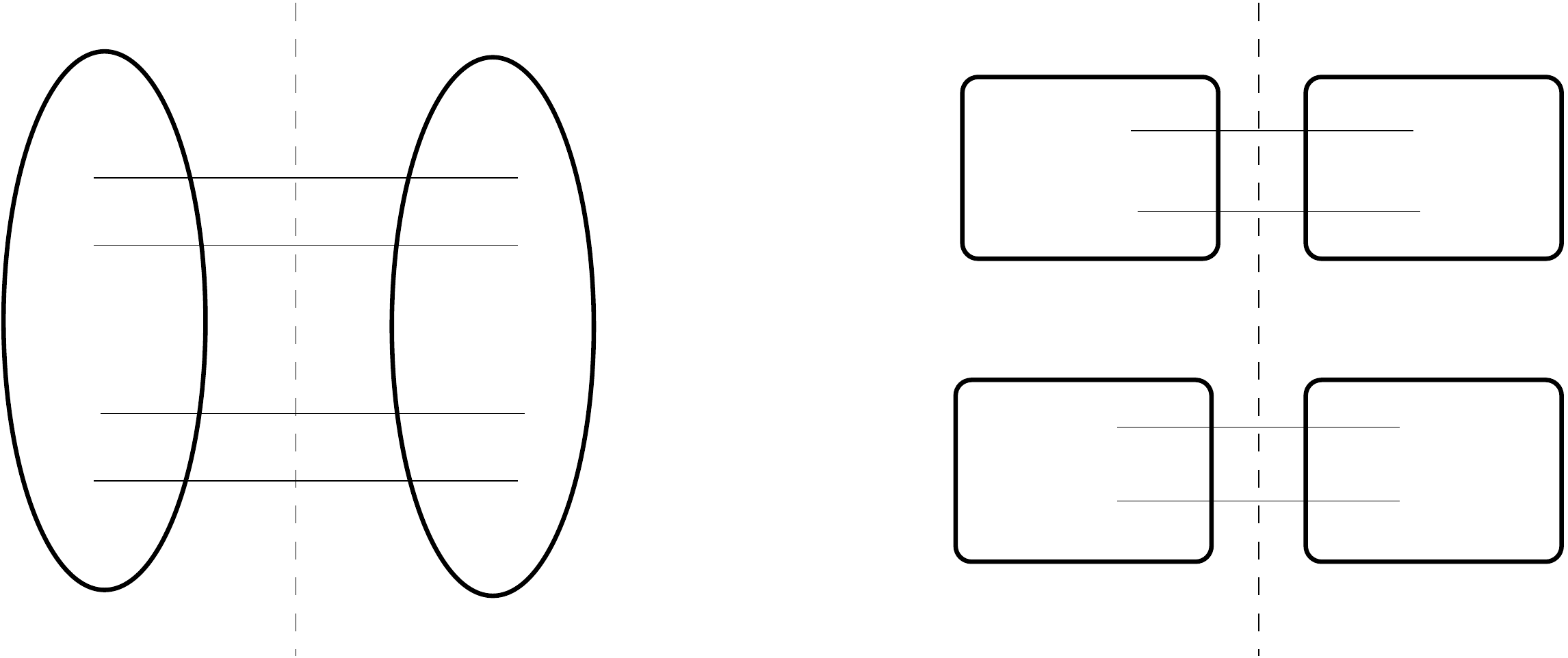_t}}}
\caption{The partition $A_0 \cup A_1 \cup A_{-1}$ of $A$ and  $B_0 \cup B_{-1} \cup B_1$ of $B$.} 
\label{fig:first}
\end{figure}

We need to show a realization of $N$ in $H$ that is stable.
We will use $N$'s witness $\vec{\alpha}$ in $G$ to define the following symmetric matching $N^*_{\alpha}$ in $H$. Note that
this is similar to how popular half-integral matchings were realized as stable matchings in a larger graph as shown in \cite{Kav16}.
\begin{itemize}
\item For all $(a,b) \in N \cap (A_{-1} \times B_1)$ do: add edges $(a_{\ell}^-,b_r^+)$ and $(b_{\ell}^+,a_r^-)$ to $N^*_{\alpha}$.
\item For all $(a,b) \in N \cap (A_1 \times B_{-1})$ do: add edges $(a_{\ell}^+,b_r^-)$ and $(b_{\ell}^-,a_r^+)$ to $N^*_{\alpha}$.
\item For all $(a,b) \in N \cap (A_0 \times B_0)$ do: add edges $(a_{\ell}^-,b_r^+)$ and $(b_{\ell}^-,a_r^+)$ to $N^*_{\alpha}$. 
\end{itemize}

For each vertex $u$ such that $(u,u) \in N$, add $(u_{\ell}^-,u_r^+)$ to $N^*_{\alpha}$. 
Thus $N^*_{\alpha}$ is a perfect matching in~$H$.
Observe that for any vertex $u$, the sum of superscripts (where $\pm$ are interpreted as $\pm 1$) of the vertices
$u_{\ell}$ and~$u_r$ on the edges in $N^*_{\alpha}$ incident to them  is exactly $2\alpha_u$.

\begin{new-claim}
\label{clm3}
The matching $N^*_{\alpha}$ is stable in $H$.
\end{new-claim}

The proof of Claim~\ref{clm3} is based on the constraints that $\vec{\alpha}$ has to satisfy and is given below.
Moreover, the fact that $N$ is a fully popular matching in $G$ implies that $N^*_{\alpha}$ is a {\em legal} matching in $H$.
This is because every edge used in $N$ is valid (resp., popular) as $N$ is $A$-popular (resp., popular) in $G$. 
So $N^*_{\alpha}$ has no forbidden edge. Thus $N^*_{\alpha}$ is a legal stable matching in $H$.  \qed
\end{proof}

\paragraph{\bf Proof of Claim~\ref{clm3}.}
We need to show that $N^*_{\alpha}$ is a stable matching in $H$. Let us consider any edge $(a^-_{\ell},b^+_r)$ in $H$ and show that this 
edge does not block $N^*_{\alpha}$.
We can assume $a \ne b$ since the edge  $(a^-_{\ell},a^+_r)$ cannot block $N^*_{\alpha}$ as $a^+_r$ is
$a_{\ell}$'s least preferred neighbor in $H$. Thus we can assume $(a,b) \in E$.

Recall that $N$ is a perfect matching (due to augmenting $G$ with self-loops).
Let  $c = N(a)$ and $d = N(b)$. The matching $N^*_{\alpha}$ in $H$ was constructed using a witness 
$\vec{\alpha}$ of $N$'s popularity in $G$. We have the following cases depending on the values of $\alpha_a$ and $\alpha_b$. 

\begin{enumerate}
\item Suppose $\alpha_a = 1$. This means $(a_{\ell}^+,c^-_r) \in N^*_{\alpha}$ for some neighbor $c$ of $a$.
Since every vertex prefers superscript~$-$ neighbors to superscript~$+$ neighbors, the vertex $a_{\ell}$ prefers $c^-_r$ to 
$b^+_r$. Thus the edge $(a^-_{\ell},b^+_r)$ does not block $N^*_{\alpha}$.
Note that this argument is independent of the value of $\alpha_b$, so it holds for all $\alpha_b \in \{\pm 1, 0\}$.

\smallskip

\item Suppose $\alpha_a = 0$ and $\alpha_b = 1$. Then $(a_{\ell}^-,c^+_r)$ and $(d^-_{\ell},b^+_r)$ are in $N^*_{\alpha}$  for some neighbors $c$ and $d$ of $a$ 
and $b$, respectively. The edge covering constraint for
the edge $(a,b)$ tells us that $\alpha_a + \alpha_b = 1 \ge \wt_N(a,b)$. Since $\wt_N(a,b) \in \{0, \pm 2\}$, this means $\wt_N(a,b) \le 0$.
So either (i)~$a$ prefers $c$ to $b$ or (ii)~$b$ prefers $d$ to $a$. In the former case, $a_{\ell}$ prefers $c^+_r$ to $b^+_r$ and in the
latter case, $b_r$ prefers $d_{\ell}^-$ to $a_{\ell}^-$. Thus the edge $(a^-_{\ell},b^+_r)$ does not block $N^*_{\alpha}$.

\smallskip

\item Suppose $\alpha_a = \alpha_b = 0$. Either $(a,b) \in N$ which implies $(a^-_{\ell},b^+_r) \in N^*$ and so it does not block $N^*_{\alpha}$
or as analyzed in case~2 above, $(a_{\ell}^-,c^+_r)$ and $(d^-_{\ell},b^+_r)$ are in $N^*_{\alpha}$. Since $\wt_N(a,b) \le 0$, either (i)~$a$ prefers 
$c$ to $b$ or (ii)~$b$ prefers $d$ to $a$. So either $a_{\ell}$ prefers $c^+_r$ to $b^+_r$ or
$b_r$ prefers $d_{\ell}^-$ to $a_{\ell}^-$. Thus the edge $(a^-_{\ell},b^+_r)$ does not block $N^*_{\alpha}$.

\smallskip

\item Suppose $\alpha_a \in \{0, -1\}$ and $\alpha_b = -1$. Then $(a_{\ell}^-,c^+_r)$ and $(d^+_{\ell},b^-_r)$ are in $N^*_{\alpha}$ for some neighbors $c$ and $d$ 
of $a$ and $b$, respectively. The edge covering constraint for
$(a,b)$ tells us that $\alpha_a + \alpha_b = -1 \ge \wt_N(a,b)$. Since $\wt_N(a,b) \in \{0, \pm 2\}$, this means $\wt_N(a,b) \le -2$.
So {\em both} $a$ and $b$ prefer their partners in $N$ to each other. Hence $a_{\ell}$ prefers $c^+_r$ to $b^+_r$ and so 
$(a^-_{\ell},b^+_r)$ does not block $N^*_{\alpha}$.

\smallskip

\item Suppose $\alpha_a = -1$ and $\alpha_b \in \{0,1\}$. Then $(a_{\ell}^-,c^+_r)$ and $(d^-_{\ell},b^+_r)$ are in $N^*_{\alpha}$  for some neighbors 
$c$ and $d$ of $a$ and $b$, respectively.  The edge covering constraint for
the edge $(a,b)$ is $\alpha_a + \alpha_b = 0 \ge \wt_N(a,b)$. If $(a,b) \in N$ then $(a^-_{\ell},b^+_r) \in N^*$ and it does not block $N^*_{\alpha}$.
Otherwise either (i)~$a$ prefers $c$ to $b$ or (ii)~$b$ prefers $d$ to $a$. In the 
former case, $a_{\ell}$ prefers $c^+_r$ to $b^+_r$ and in the latter case, $b_r$ prefers $d_{\ell}^-$ to $a_{\ell}^-$. So $(a^-_{\ell},b^+_r)$ does not block $N^*_{\alpha}$.
\end{enumerate}

Thus the edge $(a^-_{\ell},b^+_r)$ does not block $N^*_{\alpha}$.
Analogous arguments show that {\em no} edge in $H$ blocks the matching $N^*_{\alpha}$.
Hence we can conclude that $N^*_{\alpha}$ is a stable matching in $H$. \qed

\paragraph{\bf Stable matchings with forbidden edges.}
A stable matching that avoids all forbidden edges (if such a matching exists in $H$) can be computed in linear time by running 
a variant of the Gale-Shapley algorithm in $H$ where any proposal made along a forbidden edge is rejected by the vertex receiving
this proposal. Once a proposal received
along a forbidden edge is rejected by a vertex, all further proposals received along {\em worse} edges also have to rejected by 
this vertex. If some vertex is left unmatched at the end of this algorithm, then there is no stable matching in $H$ that avoids
all forbidden edges; else we have a desired stable matching in $H$. We refer to \cite{GI89} for details on this variant of the
Gale-Shapley algorithm.

Thus it can be efficiently checked if $H$ admits a legal stable matching or not.
If such a matching does not exist in $H$ then there is no fully popular matching in $G$ (by Lemma~\ref{lem:main}). 
So we will assume henceforth that there exists a legal stable matching in $H$. However the fact that such a stable matching exists
in $H$ does not imply that $G$ admits a fully popular matching. This is because $H$ is made up of two copies of $G$, thus 
any matching $M^*$ in $H$ can only be mapped to a {\em half-integral} matching in $G$. 

In order to claim the resulting matching in $G$ is integral, we need $M^*$ to be symmetric, i.e., we need $M^*$ to
have the same edges in both halves of $H$. We will not construct such a symmetric stable matching in $H$. The matching we compute 
will have a certain amount of symmetry and this will be enough to obtain a fully popular matching in $G$. If $H$ does not admit such 
a {\em partially} symmetric stable matching, then we show that $G$ has {\em no} fully popular matching.

\subsection{Two partitions of the vertex set}
\label{sec:subsets}
We run the Gale-Shapley algorithm that avoids all forbidden edges~\cite{GI89} in $H$. In this 
algorithm, vertices on the left of $H$ propose in decreasing order of preference and vertices on the right of $H$ dispose.
When $u_{\ell} \in A_L \cup B_L$ proposes to $v_r^-$, this proposal is made along $(u_{\ell}^+,v_r^-)$: so $v_r$ sees this as $u_{\ell}^+$'s 
proposal; when $u_{\ell}$ proposes to $v_r^+$, this proposal is made along $(u_{\ell}^-,v_r^+)$: so $v_r$ sees this as $u_{\ell}^-$'s 
proposal. 

If $u_{\ell}$ proposes to a neighbor $v_r$ along $(u_{\ell}^+,v^-_r)$ or $(u_{\ell}^-,v^+_r)$, 
then $v_r$ (tentatively) accepts $u_{\ell}$'s proposal only if the edge $(u,v)$ is legal; otherwise $v_r$ rejects $u_{\ell}$'s proposal 
since this is a forbidden edge. Edges ranked worse than $(u_{\ell}^+,v^-_r)$/$(u_{\ell}^-,v^+_r)$ (as the case may be)
will be deleted from the current instance---this ensures that once $v_r$ receives a proposal 
along a certain edge, whether this proposal is (tentatively) accepted or not,  $v_r$ cannot accept proposals made along 
worse edges.

Let $S_0$ be the legal stable matching in $H$ that is obtained. The sets $U_A$ and $U_B$ will be useful.

\begin{itemize}
\item Let $U_A \subseteq A$ be the set of agents $a$ such that $(a^-_{\ell},a^+_r) \in S_0$. 
\item Let $U_B \subseteq B$ be the set of jobs $b$ such that $(b^-_{\ell},b^+_r) \in S_0$. 
\end{itemize}

We know that $H$ is made up of two halves: the upper half and the lower half.  Since $S_0$ is stable and thus perfect (recall that any
stable matching in $H$ is perfect), the set of agents matched to {\em genuine} neighbors---not to their twins---in  each half of $H$ 
is $A \setminus U_A$ and similarly, the set of jobs matched to genuine neighbors in each half of $H$ is $B \setminus U_B$. 

We now define the sets $A_+,A_{-},B_+,B_{-},A'_+,A'_{-},B'_+,B'_{-}$. 
Initially these sets are empty. Then we add vertices to them as described below. 

\begin{itemize}
\item For every $(a^+_{\ell},b^-_r) \in S_0$ where $a \in A$ and $b \in B$: add $a$ to $A_+$ and $b$ to $B_{-}$.
\item For every $(a^-_{\ell},b^+_r) \in S_0$ where $a \in A$ and $b \in B$: add $a$ to $A_{-}$ and $b$ to $B_+$.

\item For every $(b^+_{\ell},a^-_r) \in S_0$ where $a \in A$ and $b \in B$: add $b$ to $B'_+$ and $a$ to $A'_{-}$.
\item For every $(b^-_{\ell},a^+_r) \in S_0$ where $a \in A$ and $b \in B$: add $b$ to $B'_{-}$ and $a$ to $A'_+$.
\end{itemize}

\begin{figure}[h]
\centerline{\resizebox{0.6\textwidth}{!}{\input{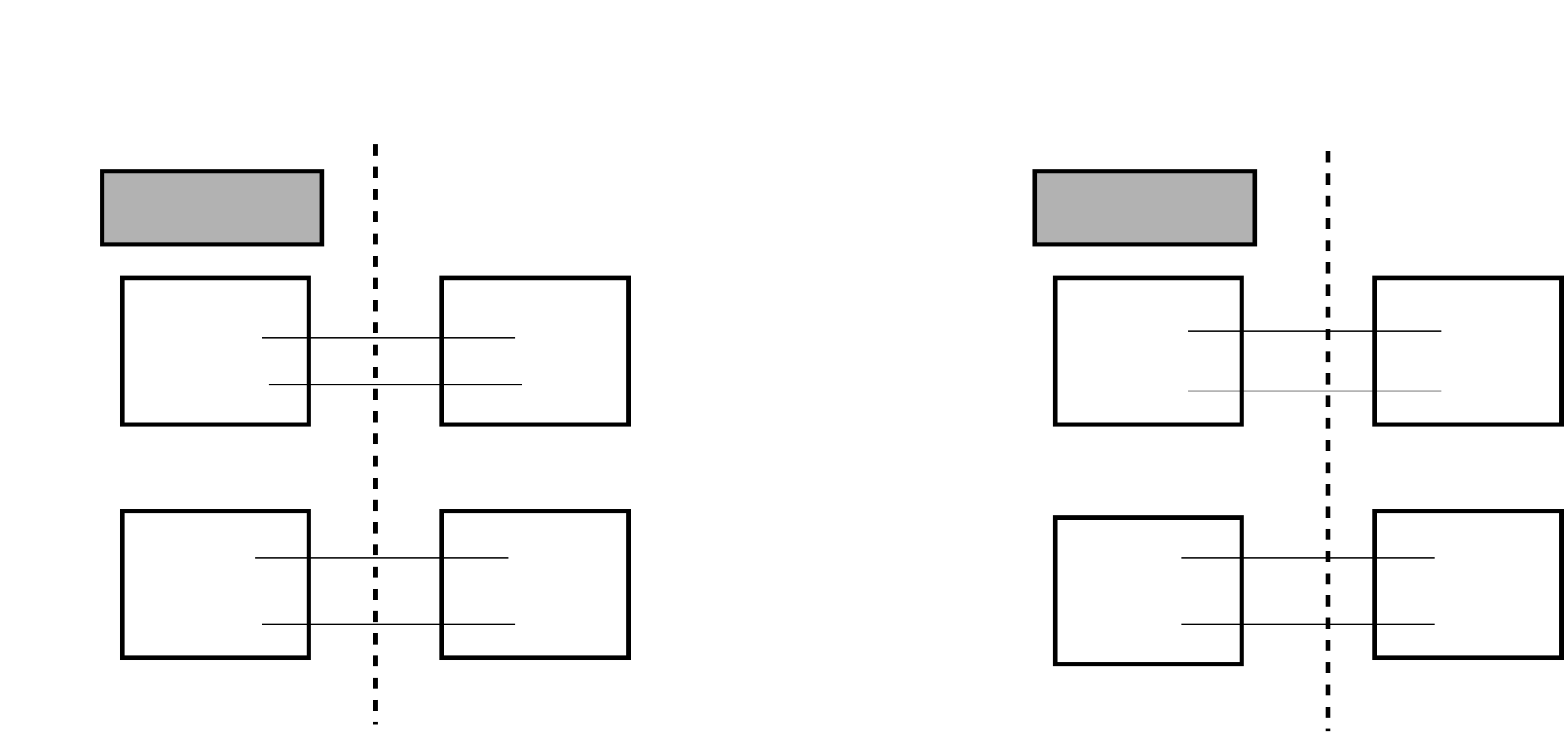_t}}}
\caption{The two partitions of the set $(A \cup B) \setminus (U_A\cup U_B)$ induced by the matching $S_0$.}
\label{fig1}
\end{figure}

We have $A\setminus U_A = A_+ \cup A_{-} = A'_+ \cup A'_{-}$ and $B\setminus U_B = B_+ \cup B_{-} = B'_+ \cup B'_{-}$.
Fig.~\ref{fig1} denotes these partitions of $A\setminus U_A$ and $B\setminus U_B$ induced by the matching $S_0$ in the upper and 
lower halves of  $H$. In this figure, the set $U_A$ has been included in the upper half and the set $U_B$ in the lower half. 

We will use $(a^-_{\ell},\ast)$ to denote any edge in the set $\{(a^-_{\ell},b^+_r): b \in B\} \cup \{(a^-_{\ell},a^+_r)\}$. 
Similarly $(\ast,a^+_r)$ denotes any edge in the set $\{(b^-_{\ell},a^+_r): b \in B\} \cup \{(a^-_{\ell},a^+_r)\}$.
Similarly for $(b^-_{\ell},\ast)$ and $(\ast,b^+_r)$. Recall that every popular matching in $G$ has a {\em witness}
$\vec{\alpha} \in \{0, \pm 1\}^n$ (see Theorem~\ref{thm:witness}).

The following lemma will be crucial to us.

\begin{lemma}
\label{lemma:forced-agents}
Let $N$ be a fully popular matching in $G$ and let  $\vec{\alpha}$ be any
witness of $N$. If $a \in A_{-} \cap A'_+$ then $\alpha_a = 0$.
\end{lemma}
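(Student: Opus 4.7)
The plan is to exploit the fact that $S_0$ is the \emph{left-optimal} legal stable matching in $H$. Because $S_0$ is produced by the left-proposing Gale-Shapley algorithm that avoids forbidden edges \cite{GI89}, the sublattice structure of legal stable matchings guarantees that every left vertex obtains its most preferred partner (over all legal stable matchings) in $S_0$, and dually every right vertex obtains its least preferred partner in $S_0$.

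Given $N$ with witness $\vec{\alpha}$, Lemma~\ref{lem:main} produces a legal stable matching $N^*_\alpha$ in $H$. From the three cases in the construction of $N^*_\alpha$, the superscript attached to $a_\ell$ on its edge in $N^*_\alpha$ is $-$ precisely when $\alpha_a \in \{0,-1\}$ and is $+$ precisely when $\alpha_a = 1$; symmetrically, the superscript on $a_r$ is $+$ precisely when $\alpha_a \in \{0,1\}$ and is $-$ precisely when $\alpha_a = -1$. So $\alpha_a$ is completely determined by the pair of superscripts carried by the edges of $N^*_\alpha$ at $a_\ell$ and at $a_r$.

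Now fix $a \in A_{-} \cap A'_+$. The condition $a \in A_{-}$ says the partner of $a_\ell$ in $S_0$ is $b^+_r$ for some $b$. Since every $-$-superscript neighbor beats every $+$-superscript neighbor in $a_\ell$'s ranking, and $a_\ell$ already holds its most preferred legal-stable partner in $S_0$, the partner of $a_\ell$ in $N^*_\alpha$ must also carry a $+$ superscript; equivalently the edge at $a_\ell$ in $N^*_\alpha$ has the form $(a^-_\ell,\ast)$. By the correspondence above, this rules out $\alpha_a = 1$. Dually, $a \in A'_+$ means the partner of $a_r$ in $S_0$ is $d^-_\ell$ for some $d$. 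Since $a_r$ is a right vertex, its partner in $S_0$ is its least preferred legal-stable partner, so the partner of $a_r$ in $N^*_\alpha$ is again a $-$-superscript vertex, forcing the edge at $a_r$ to have the form $(\ast,a^+_r)$. This rules out $\alpha_a = -1$. Combining the two exclusions yields $\alpha_a = 0$.

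The only substantive step is invoking the sublattice property (left-optimality of the forbidden-edge Gale-Shapley output) from \cite{GI89}; the remainder is routine bookkeeping about which $\pm$ superscripts appear on $a_\ell$ and $a_r$ under the three cases in the construction of $N^*_\alpha$.
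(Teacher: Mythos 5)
Your proposal is correct and follows essentially the same route as the paper: you use the $(A_L\cup B_L)$-optimality and $(A_R\cup B_R)$-pessimality of $S_0$ within the sublattice of legal stable matchings to conclude that every legal stable matching (in particular the realization $N^*_{\alpha}$ from Lemma~\ref{lem:main}) must contain $(a_{\ell}^-,\ast)$ and $(\ast,a_r^+)$, and then read off from the construction of $N^*_{\alpha}$ that this excludes $\alpha_a = 1$ and $\alpha_a = -1$ respectively. Your explicit tabulation of how the $\pm$ superscripts at $a_{\ell}$ and $a_r$ determine $\alpha_a$ is exactly the correspondence the paper uses implicitly when it notes that the superscript sum at $u_{\ell}$ and $u_r$ equals $2\alpha_u$.
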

\begin{proof}
The vertex $a \in A_{-} \cap A'_+$, where the sets $A_{-}$ and $A'_+$ are defined above.
Let ${\cal D}_0$ be the set of legal stable matchings in $H$. The set ${\cal D}_0$ forms a sublattice of the lattice\footnote{The meet of 2 stable matchings $M$ and $M'$ is the stable matching where every $u$ in $A_L\cup B_L$ (resp., $A_R\cup B_R$) is matched to its {\em more} (resp., {\em less}) preferred partner in $\{M(u),M'(u)\}$. The join of $M$ and $M'$ is the stable matching where every $u$ in $A_L\cup B_L$ (resp., $A_R\cup B_R$) is matched to its {\em less} (resp., {\em more}) preferred partner in $\{M(u),M'(u)\}$.} of stable matchings
in $H$ and the matching $S_0$ is the {\em $(A_L \cup B_L)$-optimal} matching in ${\cal D}_0$~\cite{GI89}. 
Since $a \in A_{-}$, we have $(a^-_{\ell},c^+_r) \in S_0$ for some neighbor $c^+_r$ of $a_{\ell}$.
Thus $c^+_r$ is the most preferred partner for $a_{\ell} \in A_L$ in all matchings in ${\cal D}_0$.
Recall that every vertex prefers superscript~$-$ neighbors to superscript~$+$ neighbors.
Hence no matching in ${\cal D}_0$ matches $a_{\ell}$ to a superscript~$-$ neighbor, i.e.,
every legal stable matching in $H$ has to contain $(a^-_{\ell},\ast)$.

$S_0$ is also the {\em $(A_R \cup B_R)$-pessimal} matching in ${\cal D}_0$~\cite{GI89}.
Since $a \in A'_+$, we have $(d_{\ell}^-,a_r^+) \in S_0$ for some neighbor $d^-_{\ell}$ of $a_r$.
So every matching in ${\cal D}_0$ has 
to match $a_r \in A_R$ to a neighbor at least as good as $d_{\ell}^-$, i.e., every legal stable matching in $H$ 
has to contain $(\ast,a_r^+)$. 

Suppose $N$ is a legal stable matching with a witness $\vec{\alpha}$ such that $\alpha_a \in \{\pm 1\}$.
If $\alpha_a = 1$, i.e., if $a \in A_1$ (see Fig.~\ref{fig:first}),
then there is a legal stable matching $N^*_{\alpha}$ in $H$ such that $(a_{\ell}^+,\ast) \in N^*_{\alpha}$ 
(see the proof of Lemma~\ref{lem:main}).
This contradicts our claim above that every legal stable matching in $H$ has to contain $(a_{\ell}^-,\ast)$.
So $\alpha_a = -1$, i.e., $a \in A_{-1}$. 
Then there is a legal stable matching $N^*_{\alpha}$ in $H$ such that $(\ast,a_r^-) \in N^*_{\alpha}$ (as shown in 
the proof of Lemma~\ref{lem:main}).
This again contradicts our claim above that every legal stable matching in $H$ has to contain $(\ast,a_r^+)$.
Thus $\alpha_a \notin \{\pm 1\}$, hence  $\alpha_a = 0$. \qed
\end{proof}

\begin{lemma}
  \label{lemma:forced-jobs}
  Let $N$ be a fully popular matching in $G$ and let  $\vec{\alpha}$ be any
witness of $N$. If $b \in B_+ \cap B'_{-}$ then $\alpha_b = 0$.
\end{lemma}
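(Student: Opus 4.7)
The proof will mirror that of Lemma~\ref{lemma:forced-agents}, with the roles of agents and jobs swapped between the two halves of $H$; the plan is to turn $b \in B_+ \cap B'_{-}$ into rigid constraints on the superscripts on the two copies of $b$ in every legal stable matching, and then to contradict $\alpha_b \in \{\pm 1\}$ by inspecting the construction of $N^*_\alpha$.

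First, I would invoke the fact that the set ${\cal D}_0$ of legal stable matchings in $H$ is a sublattice of the lattice of all stable matchings in $H$, and that $S_0$ is simultaneously its $(A_L \cup B_L)$-optimal and $(A_R \cup B_R)$-pessimal member. Since $b \in B_+$, some edge $(a^-_{\ell}, b^+_r)$ lies in $S_0$, so $b_r \in B_R$ is matched in $S_0$ to a superscript-$-$ neighbor; by $(A_R \cup B_R)$-pessimality this is $b_r$'s \emph{worst} partner in ${\cal D}_0$, and since every superscript-$-$ neighbor beats every superscript-$+$ neighbor in $b_r$'s ranking, every legal stable matching in $H$ must match $b_r$ to a superscript-$-$ neighbor, i.e., must contain an edge of the form $(\ast, b_r^+)$. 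Symmetrically, $b \in B'_{-}$ gives $(b^-_{\ell}, a^+_r) \in S_0$, so $b_{\ell} \in B_L$ is matched to a superscript-$+$ neighbor, which is $b_{\ell}$'s \emph{best} partner in ${\cal D}_0$ by $(A_L \cup B_L)$-optimality, so $b_{\ell}$ is never matched to a superscript-$-$ neighbor in any legal stable matching, and every such matching contains an edge of the form $(b_{\ell}^-, \ast)$.

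Now suppose $\alpha_b \in \{\pm 1\}$ and apply the explicit construction of $N^*_\alpha$ from the proof of Lemma~\ref{lem:main}, which produces a legal stable matching in $H$. If $\alpha_b = 1$ then $b \in B_1$ and its partner $a = N(b)$ satisfies $\alpha_a = -1$, so the rule for $N \cap (A_{-1} \times B_1)$ places $(b_{\ell}^+, a_r^-)$ into $N^*_\alpha$, assigning $b_{\ell}$ the $+$ superscript and contradicting the form $(b_{\ell}^-, \ast)$ established above. If $\alpha_b = -1$ then $b \in B_{-1}$ and $\alpha_a = 1$, so the rule for $N \cap (A_1 \times B_{-1})$ places $(a_{\ell}^+, b_r^-)$ into $N^*_\alpha$, assigning $b_r$ the $-$ superscript and contradicting the form $(\ast, b_r^+)$. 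Both alternatives are impossible, so $\alpha_b = 0$. The only subtlety is orienting the lattice inequality correctly on each side of $H$---$b_{\ell}$ on the proposing side enjoys its best partner in $S_0$, while $b_r$ on the receiving side suffers its worst---and this is directly analogous to the bookkeeping in Lemma~\ref{lemma:forced-agents}.
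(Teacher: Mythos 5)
Your proof is correct and is precisely the "analogous" argument the paper intends (the paper omits it, stating only that it mirrors the proof of Lemma~\ref{lemma:forced-agents}): you apply $(A_L\cup B_L)$-optimality to $b_\ell$ and $(A_R\cup B_R)$-pessimality to $b_r$ to force $(b_\ell^-,\ast)$ and $(\ast,b_r^+)$ in every matching of ${\cal D}_0$, then contradict $\alpha_b=\pm 1$ via the construction of $N^*_\alpha$. All superscript bookkeeping checks out, so nothing to add.
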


The proof of Lemma~\ref{lemma:forced-jobs} is analogous to the proof of Lemma~\ref{lemma:forced-agents}. 
We will use $G_0 = (A\cup B,E_0)$ to denote the {\em popular subgraph} of $G = (A \cup B, E')$. The edge set $E_0$ 
of $G_0$ is the set of popular edges/self-loops.
The subgraph $G_0$ need not be connected and Lemma~\ref{lemma1} will be useful to us.

\begin{lemma}[\cite{FK20}]
\label{lemma1}
Let $C$ be any connected component in the popular subgraph $G_0$. For any popular matching $N$ in $G$ and any witness 
$\vec{\alpha}$ of $N$: if $\alpha_v = 0$ for some $v \in C$ then $\alpha_u = 0$ for all $u \in C$.
\end{lemma}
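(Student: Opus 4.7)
The plan is to prove the claim by propagating the equality $\alpha = 0$ along the edges of $E_0$ inside $C$. It suffices to establish the single-edge statement: for any non-self-loop popular edge $(a,b) \in E_0$, and any popular matching $N$ with witness $\vec{\alpha}$, one has $\alpha_a = 0 \Leftrightarrow \alpha_b = 0$. Once this is shown, connectivity of $C$ in $G_0$ finishes the lemma (self-loops need no treatment since they do not connect $v$ to any other vertex).

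To prove the edge-level claim, I would use LP duality exactly as in the proof of Theorem~\ref{thm:witness}, but with the weight function $\wt_N$ in place of $\wt_M$. Since $(a,b) \in E_0$, there is a popular matching $M^*$ containing $(a,b)$. The key preliminary observation is that $M^*$ is primal-optimal for LP1 with weights $\wt_N$: popularity of $N$ gives $\wt_N(M^*) \le 0$, while popularity of $M^*$ gives $\phi(M^*,N) \ge \phi(N,M^*)$, i.e.\ $\wt_N(M^*) \ge 0$, so $\wt_N(M^*) = 0$, matching the LP optimum attained by $N$ itself. Now $\vec{\alpha}$, being a witness of $N$, is an integral dual-optimal solution to LP2 with weights $\wt_N$, so complementary slackness on the primal support $M^*$ forces the dual edge constraint tight on $(a,b)$:
\[
\alpha_a + \alpha_b \;=\; \wt_N(a,b).
\]

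The punch line is an arithmetic mismatch between the ranges of the two sides. By definition $\wt_N(a,b) \in \{0,\pm 2\}$, whereas Theorem~\ref{thm:witness} pins $\alpha_a, \alpha_b \in \{0,\pm 1\}$. So if $\alpha_a = 0$ then $\alpha_b = \wt_N(a,b)$ lies in $\{0,\pm 2\} \cap \{0,\pm 1\} = \{0\}$, forcing $\alpha_b = 0$ (and incidentally $\wt_N(a,b) = 0$). The reverse implication is symmetric. Iterating this along a path in $G_0$ from $v$ to any $u \in C$ yields $\alpha_u = 0$, as required.

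I do not anticipate a real obstacle here; the only mild subtlety is making sure to verify $\wt_N(M^*) = 0$ (so that CS is available on the entire support of $M^*$) and to justify that self-loops in $E_0$ need not be traversed for the propagation. The whole argument is essentially a one-line consequence of CS plus a parity/range mismatch between $\wt_N$ values ($\in\{0,\pm 2\}$) and witness values ($\in\{0,\pm 1\}$).
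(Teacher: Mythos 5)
Your proof is correct and follows essentially the same route as the paper: establish $\wt_N(M^*)=0$ from the mutual popularity of $N$ and $M^*$, invoke complementary slackness to get $\alpha_a+\alpha_b=\wt_N(a,b)$ on each popular edge, and exploit that $\wt_N(a,b)\in\{0,\pm 2\}$ while witness entries lie in $\{0,\pm 1\}$ before propagating along a path in $G_0$. The paper phrases the final step as ``the endpoints have the same parity'' rather than as a range intersection, but the content is identical.
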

\begin{proof}
 Consider any popular edge $(a,b)$. So there is some popular matching 
 $M$ with the edge $(a,b)$. The matching $M$ is an optimal solution to the max-weight perfect matching LP with edge weight
 function $\wt_N$ since $\wt_N(M) = \phi(M,N) - \phi(N,M) = 0$: recall that $M$ and $N$ are popular matchings in $G$.
 We know that  $\vec{\alpha}$ is an optimal solution to the dual LP. So it follows from complementary slackness
 that $\alpha_a + \alpha_b = \wt_N(a,b)$. Since $\wt_N(a,b) \in \{\pm 2, 0\}$ (an even number), the integers $\alpha_a$ and
 $\alpha_b$ have the same parity.

 Let $u$ and $v$ be any two vertices in the same connected component in the popular subgraph $G_0$. So there is a $u$-$v$ path
 $\rho$ in $G$ such that every edge in $\rho$ is a popular edge. We have just seen that the endpoints of each popular edge
 have the same parity in $\vec{\alpha}$. Hence $\alpha_u$ and $\alpha_v$ have the same parity. Thus $\alpha_v = 0$ implies
 $\alpha_u = 0$. \qed 
\end{proof}

\subsection{Our algorithm}
\label{sec:algorithm}

Lemmas~\ref{lemma:forced-agents}-\ref{lemma1} motivate our algorithm which is described as Algorithm~\ref{alg:fully-popular}.
The main step of the algorithm is the {\em while} loop that takes any unmarked vertex $v$ in $(A_{-} \cap A'_+) \cup (B_+ \cap B'_{-})$.
Initially all vertices are unmarked. Consider the first iteration of the algorithm: let $v \in A$.

Lemma~\ref{lemma:forced-agents} tells us that for any fully popular matching $N$ and any witness $\vec{\alpha}$ of $N$, we have
$\alpha_v = 0$. Lemma~\ref{lemma1} tells us that $\alpha_u = 0$ for every vertex $u$ in the component $C$, where $C$ is 
$v$'s connected component in $G_0$. The proof of Lemma~\ref{lem:main} shows $N$ has a realization $N^*_{\alpha}$ in $H$ such that 
$N^*_{\alpha}$ contains $(a_{\ell}^-,\ast)$ and $(\ast,a_r^+)$ for every agent $a \in C$. 

Thus we are interested in those legal stable matchings in $H$ that contain $(a_{\ell}^-,\ast)$ and $(\ast,a_r^+)$ 
for every agent $a \in C$. Hence our algorithm {\em forbids} all edges $(a_{\ell}^+,\ast)$ and 
$(\ast,a_r^-)$ for every agent $a \in C$ in the stable matching that we compute here.
This step is implemented by making every neighbor reject offers from $a_{\ell}^+$ (this may induce other rejections) 
and symmetrically, $a_r$ rejects all offers from superscript~$+$ neighbors. Note that the resulting matching may contain 
$(a_{\ell}^-,a_r^+)$ for some of the agents $a$ in $C$. All the vertices in $C$ get marked in this iteration.

\begin{algorithm}
\caption{Our algorithm to find a fully popular matching in $G = (A \cup B, E')$}
\label{alg:fully-popular}
\begin{algorithmic}[1]
\State Compute a legal stable matching $S_0$ in $H$ by running the Gale-Shapley algorithm with forbidden edges.

\Comment{{\em (Vertices in $A_L \cup B_L$ propose and those in $B_R \cup A_R$ dispose. Any edge 

\hspace*{4.9cm} in $H$ whose corresponding edge in $G$ is not legal is forbidden.)}}
\State Let $A_{-},A'_+$ and $B_+,B'_{-}$ be as defined earlier (see the start of Section~\ref{sec:subsets}).
\State Initially all vertices are unmarked and $i = 0$.
\While{there exists an unmarked vertex $v \in (A_{-} \cap A'_+) \cup (B_+ \cap B'_{-})$}
  \State $i = i+1$.
  \State Modify $S_{i-1}$ to $S_i$ so as to forbid all edges $(a_{\ell}^+,\ast)$ and $(\ast,a_r^-)$ for every agent $a$ in $v$'s component in the \hspace*{0.47cm} popular subgraph $G_0$.
  
  \Comment{{\em ($S_i$ is the $(A_L\cup B_L)$-optimal legal stable matching in $H$ that avoids all forbidden edges identified in the \hspace*{1.1cm} first $i$ iterations of the while-loop.)}}
  \If{there is no such legal stable matching $S_i$ in $H$} 
      \State Return ``No fully popular matching in $G$''.
  \EndIf 
  \State Update the sets $A_{-},A'_+$ and $B_+,B'_{-}$: these correspond to $S_i$ now. 
  \State Mark all vertices in $v$'s component in the popular subgraph $G_0$.   
\EndWhile
\State Return $M = \{(a,b) \in E: (a^+_{\ell},b^-_r)\ \mathrm{or}\ (a^-_{\ell},b^+_r)\ \mathrm{is\ in}\ S_i\}$.
\end{algorithmic}
\end{algorithm}

Recall that ${\cal D}_0$ is the set of legal stable matchings in $H$.
Let ${\cal D}_1 \subseteq {\cal D}_0$ be the set of all legal stable matchings in $H$ that contain $(a_{\ell}^-,\ast)$ and 
$(\ast,a_r^+)$ for every agent $a \in C$. Thus ${\cal D}_1$ is a sublattice of ${\cal D}_0$.
We know from the proof of Lemma~\ref{lem:main} that $N^*_{\alpha} \in {\cal D}_1$ where
$N$ is a fully popular matching in~$G$ and $\vec{\alpha}$ is any witness of $N$.
So if ${\cal D}_1$ is empty then we can conclude that $G$ has {\em no} fully popular matching. 
Otherwise, we have a matching $S_1 \in {\cal D}_1$ with us and we update the sets $A_{-},A'_+$ and $B_+,B'_{-}$: these
sets are defined at the start of Section~\ref{sec:subsets} and now $S_1$ replaces $S_0$ in their definitions. 

Let us assume we are now in the $i$-th iteration and let ${\cal D}_i$ be the set of legal stable matchings in $H$ that avoid
all edges forbidden by our algorithm in the first $i$ iterations. In other words, ${\cal D}_i$ is the set of those matchings in 
${\cal D}_{i-1}$ where no edge identified as {\em forbidden} in the $i$-th iteration is present. 
We have ${\cal D}_0 \supseteq {\cal D}_1 \supseteq \cdots \supseteq {\cal D}_{i-1} \supseteq {\cal D}_i$.
For all $0 \le j \le i$, the set ${\cal D}_j$ forms a sublattice of the lattice of all stable matchings in $H$~\cite{GI89}. 

\begin{lemma}
\label{lemma:new}
For every fully popular matching $N$ in $G$ and every witness $\vec{\alpha}$ of $N$, the realization $N^*_{\alpha}$ 
is an element of ${\cal D}_i$.
\end{lemma}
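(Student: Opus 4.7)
The plan is to prove Lemma~\ref{lemma:new} by induction on $i$. The base case $i=0$ is exactly Lemma~\ref{lem:main}: for any fully popular $N$ with witness $\vec{\alpha}$, the realization $N^*_{\alpha}$ is a legal stable matching in $H$, hence $N^*_{\alpha}\in {\cal D}_0$.

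For the inductive step, I would assume $N^*_{\alpha}\in {\cal D}_{j-1}$ for some $1\le j\le i$ and show $N^*_{\alpha}\in {\cal D}_{j}$. Let $v$ be the unmarked vertex chosen in iteration $j$; WLOG $v\in A_{-}\cap A'_+$ where these sets are defined with respect to $S_{j-1}$ (the case $v\in B_+\cap B'_{-}$ is symmetric, using Lemma~\ref{lemma:forced-jobs}). The key step is to lift the argument of Lemma~\ref{lemma:forced-agents} from the whole lattice ${\cal D}_0$ to the sublattice ${\cal D}_{j-1}$. Since ${\cal D}_{j-1}$ is itself a sublattice of stable matchings in $H$ with $S_{j-1}$ as its $(A_L\cup B_L)$-optimal element and simultaneously its $(A_R\cup B_R)$-pessimal element~\cite{GI89}, and since $v\in A_{-}\cap A'_+$ with respect to $S_{j-1}$, the very same argument as in the proof of Lemma~\ref{lemma:forced-agents} shows that \emph{every} matching in ${\cal D}_{j-1}$ must contain an edge $(v_{\ell}^-,\ast)$ and an edge $(\ast,v_r^+)$.

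Now I would apply this to $N^*_{\alpha}$: by inductive hypothesis $N^*_{\alpha}\in {\cal D}_{j-1}$, so both $(v_{\ell}^-,\ast)$ and $(\ast,v_r^+)$ appear in $N^*_{\alpha}$. Inspecting the three construction rules for $N^*_{\alpha}$ in the proof of Lemma~\ref{lem:main}, $\alpha_v=1$ would produce an edge $(v_{\ell}^+,\ast)$ at $v_\ell$, and $\alpha_v=-1$ would produce an edge $(\ast,v_r^-)$ at $v_r$; both are excluded. Hence $\alpha_v=0$. Invoking Lemma~\ref{lemma1} on the connected component $C$ of $v$ in the popular subgraph $G_0$, I get $\alpha_u=0$ for every $u\in C$. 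Looking again at the construction of $N^*_{\alpha}$, whenever $\alpha_a=0$ for an agent $a$, the two edges of $N^*_{\alpha}$ incident to $a_\ell$ and $a_r$ are of the form $(a_{\ell}^-,\ast)$ and $(\ast,a_r^+)$ (either a matched edge pair or the self-loop edge $(a_{\ell}^-,a_r^+)$). In particular, for every agent $a\in C$, the matching $N^*_{\alpha}$ contains neither $(a_{\ell}^+,\ast)$ nor $(\ast,a_r^-)$, which are precisely the edges forbidden in iteration $j$. Therefore $N^*_{\alpha}\in {\cal D}_j$, completing the induction.

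The only non-routine step is the lifting of Lemma~\ref{lemma:forced-agents} to the sublattice ${\cal D}_{j-1}$; this hinges on the fact that the algorithm maintains $S_{j-1}$ as the $(A_L\cup B_L)$-optimal element of ${\cal D}_{j-1}$ (this is exactly what the Gale-Shapley-with-forbidden-edges variant of~\cite{GI89} delivers after each pruning step), and on ${\cal D}_{j-1}$ being a sublattice so that optimal/pessimal elements on the two sides actually exist. Once that is in place, the rest of the proof is a direct reuse of the case analysis in Lemma~\ref{lemma:forced-agents} together with Lemma~\ref{lemma1}.
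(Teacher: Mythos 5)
Your proof is correct and follows essentially the same route as the paper's: induction on the iteration index, with the key step being that $S_{j-1}$ is simultaneously the $(A_L\cup B_L)$-optimal and $(A_R\cup B_R)$-pessimal element of the sublattice ${\cal D}_{j-1}$, which forces $\alpha_v=0$ for the chosen unmarked vertex (the paper isolates this as Claim~\ref{clm5}), after which Lemma~\ref{lemma1} and the construction of $N^*_{\alpha}$ finish the argument exactly as you describe. No gaps.
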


\begin{proof}
We need to show that $N^*_{\alpha}$ is an element of ${\cal D}_i$.
We will prove this by induction. We know from Lemma~\ref{lem:main} that the base case is true, i.e., $N^*_{\alpha} \in {\cal D}_0$.
By induction hypothesis, let us assume that for every fully popular matching $N$ and any witness $\vec{\alpha}$ of $N$, the realization 
$N^*_{\alpha}$ is an element of 
${\cal D}_{i-1}$. Since the algorithm entered the $i$-th iteration of the while loop, there was an unmarked vertex $x$ in 
$(A_{-} \cap A'_+) \cup (B_+ \cap B'_{-})$ at the start of this iteration. 

\begin{new-claim}
\label{clm5}
For any fully popular matching $N$ and any witness $\vec{\alpha}$ of $N$, we have $\alpha_x = 0$.
\end{new-claim}

The proof of Claim~\ref{clm5} (this is similar to the proof of Lemma~\ref{lemma:forced-agents}) is given below. Claim~\ref{clm5} along with 
Lemma~\ref{lemma1} tells us that for all vertices $u$ in $x$'s component $C'$ in $G_0$, 
we have $\alpha_u = 0$. The proof of Lemma~\ref{lem:main} shows us that $N^*_{\alpha}$ contains $(a_{\ell}^-,\ast)$ and $(\ast,a_r^+)$ for every 
agent $a \in C'$. Since $N^*_{\alpha} \in {\cal D}_{i-1}$, it follows that $N^*_{\alpha}$ is an element in ${\cal D}_i$. 
Thus for every fully popular matching $N$ in $G$ and every witness $\vec{\alpha}$ of~$N$, the realization $N^*_{\alpha}$ 
is an element of ${\cal D}_i$. \qed
\end{proof}

\paragraph{\bf Proof of Claim~\ref{clm5}.}
The matching $S_{i-1}$ that is computed in line~6 of the $(i-1)$-th iteration is the $(A_L\cup B_L)$-optimal matching in the
lattice ${\cal D}_{i-1}$~\cite{GI89}. Hence if $(x_{\ell}^-,\ast) \in S_{i-1}$ for some $x_{\ell} \in A_L\cup B_L$ then $(x_{\ell}^-,\ast)$
belongs to every matching in ${\cal D}_{i-1}$. The matching $S_{i-1}$ is also the $(A_R \cup B_R)$-pessimal matching in the set
${\cal D}_{i-1}$~\cite{GI89}. Hence if 
$(\ast,x_r^+) \in S_{i-1}$ for some $x_r \in A_R\cup B_R$ then $(\ast,x_r^+)$  belongs to every matching in ${\cal D}_{i-1}$.

If the above claim 
is false then there is a fully popular matching $N$ and a witness $\vec{\alpha}$ of $N$ with
$\alpha_x \in \{\pm 1\}$. If $\alpha_x = 1$ then there is a legal stable matching $N^*_{\alpha}$ in $H$ such that 
$(x_{\ell}^+,\ast) \in N^*_{\alpha}$.
If $\alpha_x = -1$ then there is a legal stable matching $N^*_{\alpha}$ in $H$ such that $(\ast,x^-_r) \in N^*_{\alpha}$.
Since $N^*_{\alpha} \in {\cal D}_{i-1}$, both cases contradict our earlier observation that every matching in 
${\cal D}_{i-1}$ has to contain $(x_{\ell}^-,\ast)$ and $(\ast,x^+_r)$. Thus for any fully popular matching $N$ and any witness 
$\vec{\alpha}$ of $N$, we have $\alpha_x = 0$. \qed

\medskip

We now need to prove the correctness of our algorithm. Suppose the algorithm returns ``No fully popular matching in $G$''. Then this means that ${\cal D}_i = \emptyset$ for some $i \ge 1$.
Lemma~\ref{lemma:new} tells us that if ${\cal D}_i = \emptyset$, then there is indeed no fully popular matching in $G$.
This finishes one part of our proof of correctness. 

Suppose the algorithm does not return ``No fully popular matching in $G$''.
Since at least one unmarked vertex gets marked in every iteration of the while loop, the algorithm always terminates. 
So a matching~$M$ is returned. We need to show that $M$ is a fully popular matching in $G$. 
This is the tougher side in the proof of correctness and this is proved in Section~\ref{sec:correct}.

\section{Popularity of the Matching $M$}
\label{sec:correct}

In this section we complete the proof of correctness of Algorithm~\ref{alg:fully-popular}.
We need to show that the matching $M$ returned by Algorithm~\ref{alg:fully-popular} is fully popular in $G$.
Let $S_i$ be the matching in $H$ computed in the final iteration of Algorithm~\ref{alg:fully-popular}.
Then $M$ is the matching (in $G$) induced by $S_i$ in the upper half of $H$. 
The matching $M$ is as defined below:
\[M = \{(a,b) \in E: (a^+_{\ell},b^-_r)\ \mathrm{or}\ (a^-_{\ell},b^+_r)\ \mathrm{is\ in}\ S_i\}.\]

Note that $M \subseteq (A_+ \times B_{-}) \cup (A_{-} \times B_+)$, where the sets $A_+,B_{-},A_{-},B_+$ are defined at the beginning
of Section~\ref{sec:subsets}: the matching $S_i$ replaces $S_0$ in the definitions of $A_+,B_{-},A_{-},B_+,A'_+,B'_{-},A'_{-},B'_+$ now.
Similarly, let $L$ be the matching (in $G$) induced by $S_i$ in the lower half of $H$. So we have:
\[L = \{(a,b) \in E: (b^+_{\ell},a^-_r)\ \mathrm{or}\ (b^-_{\ell},a^+_r)\ \mathrm{is\ in}\ S_i\}.\]

Thus $L \subseteq (A'_+ \times B'_{-}) \cup (A'_{-} \times B'_+)$.
Let $U_A$ (resp., $U_B$) be the set of vertices $u$ in $A$ (resp., $B$) such that $(u^-_{\ell},u^+_r) \in S_i$. 
The vertices in $U_A \cup U_B$ are unmatched in both $M$ and $L$. 

Since $S_i$ is a legal stable matching in $H$, it matches all vertices in $H$ using valid edges.
Thus by Theorem~\ref{thm:A-popular}, $M$ is $A$-popular.\footnote{In order to apply Theorem~\ref{thm:A-popular}, we ought to say $M \cup \{(u,u): u \in U_A\cup U_B\}$ is $A$-popular.}
We need to show that $M$ is popular in $G$. 

Theorem~\ref{thm:M-U-M-L} is our starting point.
The subgraph $G \setminus U_B$ is the subgraph of $G$ induced on $A\cup (B\setminus U_B)$ and similarly,
the subgraph $G \setminus U_A$ is the subgraph of $G$ induced on $(A\setminus U_A)\cup B$.

\begin{theorem}
\label{thm:M-U-M-L}
The matching $M$ is popular in the subgraph $G \setminus U_B$. Also, the matching $L$ is popular in the subgraph $G \setminus U_A$.
\end{theorem}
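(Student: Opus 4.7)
\bigskip

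\noindent\textbf{Proof Proposal.}
The plan is to construct, for the matching $M$ in $G \setminus U_B$, an integral witness $\vec{\alpha}\in\{0,\pm 1\}^n$ by reading off the superscripts of $S_i$ in the upper half of $H$, and then invoke Theorem~\ref{thm:witness}. Concretely, I set $\alpha_a = +1$ if $a\in A_+$, $\alpha_a = -1$ if $a\in A_-$, $\alpha_a = 0$ if $a\in U_A$, and $\alpha_b = +1$ if $b\in B_+$, $\alpha_b = -1$ if $b\in B_-$. The two bijections $A_+\leftrightarrow B_-$ and $A_-\leftrightarrow B_+$ induced by the matching edges $(a_\ell^+,b_r^-)\in S_i$ and $(a_\ell^-,b_r^+)\in S_i$ give $|A_+|=|B_-|$ and $|A_-|=|B_+|$, hence $\sum_u\alpha_u=0$.

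The heart of the argument is verifying the edge-covering constraint $\alpha_a+\alpha_b\ge \wt_M(a,b)$ for every $(a,b)\in E$ with $b\notin U_B$. Since $\alpha_a+\alpha_b\in\{-2,-1,0,2\}$, the case $(A_+\times B_+)$ is trivial; the remaining cases will be resolved by translating ``no edge blocks $S_i$ in $H$'' (a property of \emph{any} stable, forbidden-edge-avoiding matching in $H$, whether the relevant $H$-edge is forbidden or not) together with the preference rule that every vertex in $H$ prefers all superscript-$-$ neighbors to all superscript-$+$ neighbors. For $(a,b)\in A_+\times B_-$ or $A_-\times B_+$, one of the two parallel $H$-edges has both endpoints matched to same-tier partners, so non-blocking yields ``$a$ prefers $M(a)$ to $b$ or $b$ prefers $M(b)$ to $a$'', i.e.\ $(a,b)$ is not blocking $M$ in $G$ and $\wt_M(a,b)\le 0$. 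For $(a,b)\in A_-\times B_-$, one parallel edge pins $a$'s preference (because $a_\ell$'s partner lies in its worse superscript tier from the wrong side) and the other pins $b$'s; combining, both prefer their $M$-partners and $\wt_M(a,b)=-2$.

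The remaining cases involve $a\in U_A$, where $S_i(a_\ell)=a_r^+$ is $a_\ell$'s \emph{last} choice. For $b\in B_+$, the $H$-edge $(a_\ell^-,b_r^+)$ cannot fail to block on the $a_\ell$ side (any genuine neighbor beats $a_r^+$), so stability forces $b_r$ to prefer its partner $M(b)_\ell^-$ to $a_\ell^-$; this gives $b$ prefers $M(b)$ to $a$, hence $\wt_M(a,b)\le 0 < 1 = \alpha_a+\alpha_b$. For $b\in B_-$, the same automatic failure on $a_\ell$'s side together with $S_i(b_r)=M(b)_\ell^+$ (superscript-$+$ on $b_r$'s side, \emph{worse} than $a_\ell^-$) forces $(a_\ell^-,b_r^+)$ to block $S_i$; this contradicts the stability of $S_i$, so the configuration $a\in U_A,\ b\in B_-,\ (a,b)\in E$ simply does not occur. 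The self-loop constraints $\alpha_u\ge\wt_M(u,u)$ are trivial: $\alpha_u\ge -1$ always, and for $u\in U_A$ we have $\alpha_u=0=\wt_M(u,u)$ because the self-loop is present in the augmented $M$.

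With all constraints verified and $\sum_u\alpha_u=0$, Theorem~\ref{thm:witness} yields that $M$ is popular in $G\setminus U_B$. The statement for $L$ in $G\setminus U_A$ is proved by the completely symmetric argument, using the superscripts of $S_i$ in the \emph{lower} half of $H$: set $\beta_a=\pm 1$ according to $a\in A'_\pm$, $\beta_b=\pm 1$ according to $b\in B'_\pm$, and $\beta_b=0$ for $b\in U_B$. The analogous case $b\in U_B,\ a\in A'_-,\ (a,b)\in E$ is ruled out by the same ``last-choice plus superscript preference'' contradiction. I expect the main obstacle to be precisely these two vacuous cases: making the contradiction to stability rigorous requires that $S_i$ has no blocking edges among the \emph{full} edge set of $H$ (not merely among non-forbidden edges), which is the nontrivial guarantee of the Gale--Shapley-with-forbidden-edges subroutine of \cite{GI89} we rely on.
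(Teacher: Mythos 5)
Your proposal is correct and follows essentially the same route as the paper: you define the same witness vectors (your $\vec{\alpha}$ and $\vec{\beta}$ are the paper's $\vec{\gamma}$ and $\vec{\beta}$, read off from the superscripts of $S_i$ in the upper and lower halves of $H$), verify $\sum_u \alpha_u = 0$ via $M \subseteq (A_+\times B_-)\cup(A_-\times B_+)$, and check the edge-covering constraints by the same case analysis before invoking Theorem~\ref{thm:witness}. The only cosmetic differences are that you argue each case directly from the stability of $S_i$ in the full graph $H$ (correctly flagging that this is the guarantee of the Gale--Shapley-with-forbidden-edges subroutine) where the paper sometimes reasons via the proposal/rejection dynamics, and that you dispose of the configuration $a\in U_A,\ b\in B_-$ as vacuous by a blocking-edge contradiction where the paper shows directly that every neighbor of a vertex in $U_A$ lies in $B_+$ --- these are equivalent.
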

\begin{proof}
We will use Theorem~\ref{thm:witness} to prove the popularity of $L$ and $M$ in  $G \setminus U_A$ and $G \setminus U_B$,
respectively.
The popularity of $L$ in  $G \setminus U_A$ will be shown using the witness $\vec{\beta}$ defined below and the
popularity of $M$ in  $G \setminus U_B$ will be shown using the witness $\vec{\gamma}$ defined below.

\begin{enumerate}
\item $\beta_u = 1$ for $u \in A'_+ \cup B'_+$, \ \ \  $\beta_u = -1$ for $u \in A'_{-} \cup B'_{-}$, \ \ \ and\ \ \ $\beta_u = 0$ for $u \in U_B$.
\item $\gamma_u = 1$ for $u \in A_+ \cup B_+$, \ \ \  $\gamma_u = -1$ for $u \in A_{-} \cup B_{-}$, \ \ \ and \ \ $\gamma_u = 0$ for $u \in U_A$.
\end{enumerate}

Observe that $\sum_{u \in (A\setminus U_A) \cup B}\beta_u = 0$. This is because $L \subseteq (A'_+ \times B'_{-}) \cup (A'_{-} \times B'_+)$.
Note that $\wt_{L}(u,u) = 0$ for $u \in U_B$ and $\wt_{L}(u,u) = -1$ for all $u \notin U_A\cup U_B$. Thus
we have $\beta_u \ge \wt_{L}(u,u)$ for all $u \in (A\setminus U_A) \cup B$.

Similarly, $\sum_{u \in A \cup (B\setminus U_B)}\gamma_u = 0$.
Also, $\gamma_u \ge \wt_{M}(u,u)$ for all $u \in A \cup (B\setminus U_B)$.

\begin{new-claim}
  \label{claim16}
  $\beta_a + \beta_b \ge \wt_{L}(a,b)$ for all edges $(a,b)$ where $a \in A \setminus U_A$ and $b \in B$. 
\end{new-claim}

\begin{new-claim}
  \label{claim17}
  $\gamma_a + \gamma_b \ge \wt_M(a,b)$ for all edges $(a,b)$ where $a \in A$ and $b \in B \setminus U_B$. 
\end{new-claim}
We will prove Claim~\ref{claim17} below. The proof of Claim~\ref{claim16} is analogous.
\begin{itemize}
\item Case~1: let $a \in U_A$. 
We set $\gamma_a = 0$ and we know that $(a_{\ell}^-,a_r^+) \in S_i$.
Recall that $a_r^+$ is $a_{\ell}$'s least preferred neighbor, thus $a_{\ell}$ must have been rejected by all its more preferred neighbors. 
That is, every neighbor $b^+_r$ of $a_{\ell}$ must have received a proposal from $a_{\ell}^-$. 
Since $b_r$ prefers superscript~$-$ neighbors to superscript~$+$ neighbors, this means $(d^-_{\ell},b^+_r) \in S_i$ for some neighbor 
$d^-_{\ell}$ that $b_r$ prefers to $a^-_{\ell}$, i.e., $b$ prefers $d$ to $a$. 
Thus $b \in B_1$ (so $\gamma_b = 1$) and moreover, $\wt_{M}(a,b) = 0$. Hence 
$\gamma_a + \gamma_b = 1 > \wt_{M}(a,b)$.

\medskip

\item Case~2: let $a \in A_{-}$. 
There are two possibilities: (1)~$b \in B_{-}$ and 
(2)~$b \in B_+$. Suppose $b \in B_-$. Then we have $(a^-_{\ell},c^+_r)$ and $(d^+_{\ell},b^-_r)$ in $S_i$ for some neighbors 
$c$ and $d$ of $a$ and $b$, respectively. Since every vertex prefers superscript~$-$ 
neighbors to superscript~$+$ neighbors, it means $a_{\ell}$ proposed to $b^-_r$ and got rejected, i.e., $b_r$ prefers its
partner $d^+_{\ell}$
to $a^+_{\ell}$. We also claim $a_{\ell}$ prefers its partner $c^+_r$ to $b^+_r$. This is because $b_r$ prefers $a^-_{\ell}$ 
to $d^+_{\ell}$ 
(superscript~$-$ neighbors over superscript $+$ neighbors): so if $a^-_{\ell}$ had proposed to $b_r$, then $b_r$ would have
rejected its partner $d^+_{\ell}$. This means that both $a$ and $b$ prefer their partners in $M$ to each other. Thus
$\wt_{M}(a,b) = -2 = \gamma_a + \gamma_b$.

\smallskip

Suppose $b \in B_+$. Then either (i)~$(a^-_{\ell},b^+_r) \in S_i$ or (ii)~$(a^-_{\ell},c^+_r)$ and $(d^-_{\ell},b^+_r)$ are 
in $S_i$ for some neighbors $c$ and $d$ of $a$ and $b$, respectively.
In subcase~(i), we have $\wt_{M}(a,b) = 0 = \gamma_a + \gamma_b$ and in subcase~(ii), the stability of $S_i$ in $H$ implies that either
$a_{\ell}$ prefers $c^+_r$ to $b^+_r$ or $b_r$ prefers $d^-_{\ell}$ to $a^-_{\ell}$, thus $\wt_{M}(a,b) \le 0 = \gamma_a + \gamma_b$.

\medskip

\item Case~3: let $a \in A_{+}$. 
As before, there are two possibilities: $b \in B_+$ and $b \in B_{-}$. When $b \in B_+$, we have $\gamma_a + \gamma_b = 2$ 
and since $\wt_{M}(a,b) \le 2$, the constraint $\wt_{M}(a,b) \le \gamma_a + \gamma_b$ obviously holds.

When $b \in B_-$, either (i)~$(a^+_{\ell},b^-_r) \in S_i$ or (ii)~$(a^+_{\ell},c^-_r)$ and $(d^+_{\ell},b^-_r)$ are in $S_i$.
In subcase~(i), we have $\wt_{M}(a,b) = 0 = \gamma_a + \gamma_b$ and in subcase~(ii), it follows from the stability of $S_i$ in $H$
that either $a_{\ell}$ prefers $c^-_r$ to $b^-_r$ or $b_r$ prefers $d^+_{\ell}$ to $a^+_{\ell}$.
Thus $\wt_{M}(a,b) \le 0 = \gamma_a + \gamma_b$. 
\end{itemize}

This finishes the proof of $M$'s popularity in $G \setminus U_B$ (by Theorem~\ref{thm:witness}). 
Similarly, $L$ is popular in $G \setminus U_A$ by using the witness $\vec{\beta}$ defined above. \qed 
\end{proof}

Theorem~\ref{thm:M-U-M-L} tells us that the matching $M$ is popular in the subgraph $G \setminus U_B$.
However we need to prove the popularity of $M$ in the {\em entire} graph $G$, i.e., we need to include vertices in
$U_B$ as well. Setting $\gamma_b = 0$ for $b \in U_B$ will not cover edges in $A_{-}\times U_B$. 
To prove the popularity of $M$ in $G$, we will use the fact that $L$ is popular in $G \setminus U_A$ and show that $M$ and $L$
have several edges in common.

Let $Z$ be the set of all vertices outside $U_A \cup U_B$ that got marked in our algorithm. So these are the marked vertices
that are matched in $S_i$ to genuine neighbors (not to their twins). 
Since we marked entire connected components in the popular subgraph $G_0$ 
in Algorithm~1, both $M$ and $L$ match vertices in $Z$ to each other.

Lemma~\ref{new-correct2} shows that the matching $S_i$ has ``partial symmetry'' across the upper and lower halves of the graph $H$; more
precisely, $M$ and $L$ are identical on the set $Z$. This will be key to showing $M$'s popularity in $G$.
The following lemma will be useful in proving Lemma~\ref{new-correct2}.

\begin{lemma}
\label{correct0}
$M$ and $L$ are stable matchings when restricted to vertices in $Z \cup U_A \cup U_B$.
\end{lemma}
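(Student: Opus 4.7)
The plan is to use stability of $S_i$ in $H$, together with the forbidding rules used by the algorithm, to rule out blocking edges of $M$ (and symmetrically $L$) inside the induced subgraph on $Z\cup U_A\cup U_B$. I will carry out the argument for $M$; the one for $L$ is its mirror image, replacing the upper half of $H$ by the lower half and the test edge $(a_\ell^-,b_r^+)$ by $(b_\ell^-,a_r^+)$.

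Before the case analysis I record the structural facts I will use. Because $Z$ is obtained by adding entire connected components of the popular subgraph $G_0$ and $M$ uses only popular edges, the $M$-partner of any vertex in $Z$ lies in $Z$, so $M$ restricted to $Z\cup U_A\cup U_B$ is genuinely a matching (with self-loops at $U_A\cup U_B$). The forbidding rule imposed by the while loop gives: for every $u\in Z$, both $(u_\ell^-,\ast)\in S_i$ and $(\ast,u_r^+)\in S_i$; and for every $u\in U_A\cup U_B$, $(u_\ell^-,u_r^+)\in S_i$.

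Now fix an edge $(a,b)\in E$ with $a\in A$, $b\in B$, both in $Z\cup U_A\cup U_B$, and test $(a_\ell^-,b_r^+)$ against $S_i$. If $a,b\in Z$, stability of $S_i$ on $(a_\ell^-,b_r^+)$ gives directly that either $a$ prefers $M(a)$ to $b$ or $b$ prefers $M(b)$ to $a$. If $a\in U_A$ and $b\in Z$, then $a_r^+$ is $a_\ell$'s last choice, so $a_\ell$ strictly prefers $b_r^+$; stability therefore forces $b_r$ to prefer its partner $M(b)_\ell^-$ to $a_\ell^-$. The case $a\in Z$, $b\in U_B$ is symmetric, using that $b_r$ prefers every genuine superscript-$-$ neighbor to its twin $b_\ell^-$. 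Finally, $a\in U_A$ and $b\in U_B$ cannot occur: by Theorem~\ref{thm:A-popular}, $s(a)=a$ and $b\ne f(a')$ for every $a'$, so $(a,b)$ is not valid, hence not legal; but then the forbidden proposal $(a_\ell^-,b_r^+)$ in the Gale-Shapley variant causes $b_r$ to delete every edge ranked no better than $a_\ell^-$, in particular the twin edge $(b_\ell^-,b_r^+)$, contradicting $b\in U_B$.

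The delicate step is this last case, which is where the valid-edge structure and the ``delete worse edges upon a forbidden proposal'' convention of the Gale-Shapley variant must combine to exclude any edge of $E$ with one endpoint in $U_A$ and one in $U_B$. The remaining three cases are direct applications of $S_i$-stability in $H$ together with the conventions that superscript-$-$ beats superscript-$+$ and that the twin $u_\ell^-$ sits at the bottom of the superscript-$-$ block in $u_r$'s preference list.
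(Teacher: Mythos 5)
Your proof is correct, and it reaches the conclusion by a more direct route than the paper's. The paper does not re-examine the stability of $S_i$ edge by edge here: it reuses the witnesses $\vec{\gamma}$ and $\vec{\beta}$ from Theorem~\ref{thm:M-U-M-L}, reading off $\wt_M(a,b)\le\gamma_a+\gamma_b\le 0$ on $Z_A\times Z_B$ (and likewise for $L$ via $\vec{\beta}$), and for the cross cases $U_A\times Z_B$ and $Z_A\times U_B$ it combines the witness bound plus the parity of $\wt_M$ for one matching with a proposal-order argument for the other. Your unpacking of the same facts directly from the stability of $S_i$ on the test edges $(a_\ell^-,b_r^+)$ and $(b_\ell^-,a_r^+)$, using $Z_A\subseteq A_-\cap A'_+$ and $Z_B\subseteq B_+\cap B'_-$, is equivalent and self-contained. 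The one case where you genuinely diverge is $U_A\times U_B$: the paper rules out such edges by observing that every $u\in U_A\cup U_B$ must be an \emph{unstable} vertex of $G$ (otherwise the self-loop $(u,u)$ is unpopular and $(u_\ell^-,u_r^+)$ is forbidden), and two unstable vertices cannot be adjacent since such an edge would block a stable matching leaving both unmatched. Your alternative---the edge is invalid because $s(a)=a$ and $b$ is nobody's top choice, so the deletion rule of the Gale--Shapley variant eliminates $(b_\ell^-,b_r^+)$---also works, with one small caveat: $a_\ell$ might never actually propose along $(a_\ell^-,b_r^+)$ if that edge was already deleted because $b_r$ earlier received a proposal along a better edge; but in that event the same earlier deletion already removed the worse edge $(b_\ell^-,b_r^+)$, so the contradiction survives. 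The paper's unstable-vertex argument has the advantage of not appealing to the execution order of the algorithm at all.
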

\begin{proof}
Let $Z_A = Z \cap A$ and let $Z_B = Z \cap B$. It follows from our algorithm that $Z_A \subseteq A_{-} \cap A'_+$
and $Z_B \subseteq B_+ \cap B'_{-}$ (see Fig.~\ref{fig:third}). 

\begin{figure}[t]
\centerline{\resizebox{0.6\textwidth}{!}{\input{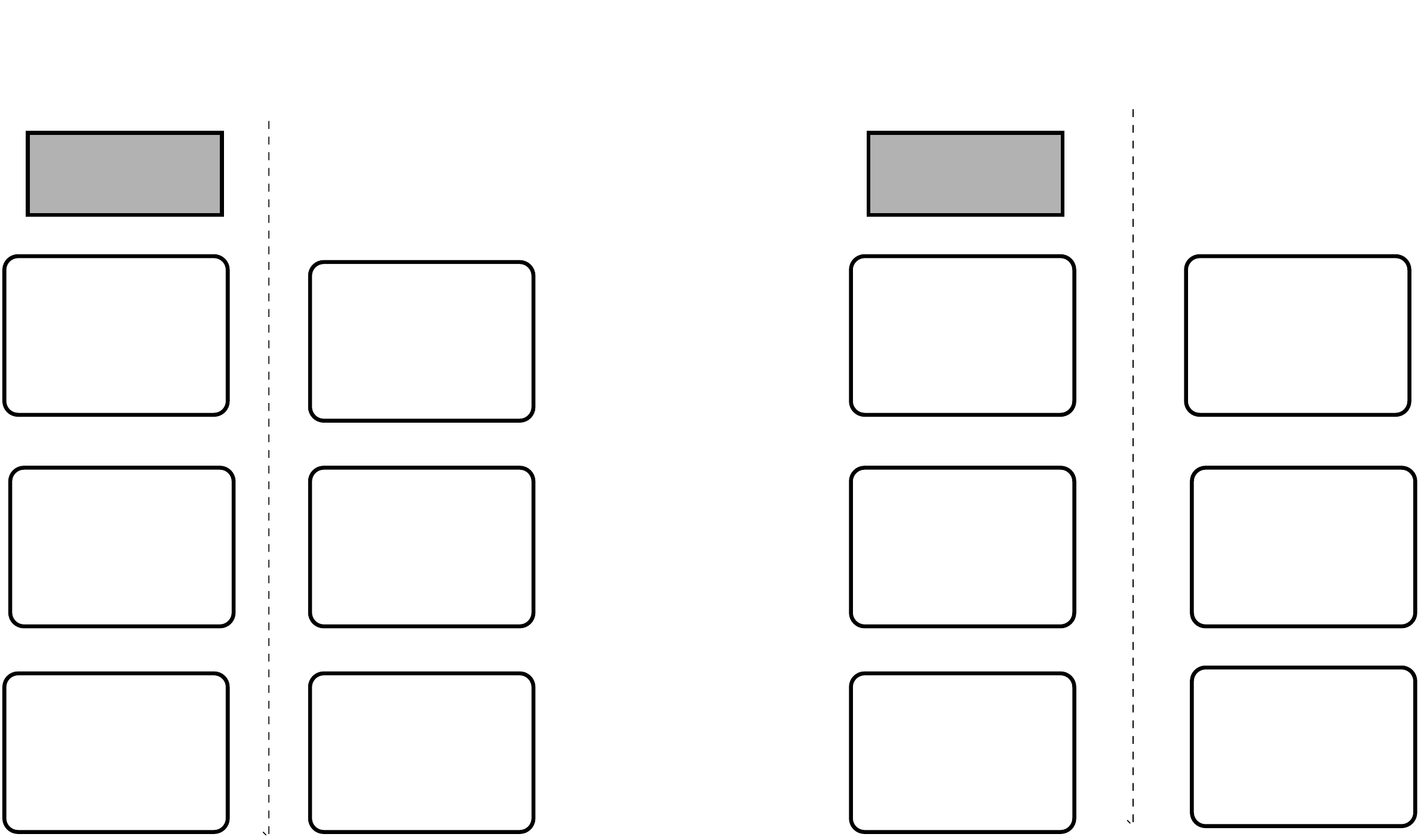_t}}}
\caption{The final picture of the partitions created by $M$ and $L$ in the upper and lower halves of $H$, respectively. The while loop 
termination condition implies $(A_{-}\setminus Z_A) \subseteq A'_{-}$ and $(A'_+\setminus Z_A) \subseteq A_+$ and so on.}
\label{fig:third}
\end{figure}

We need to show that $M$ (similarly, $L$) has no blocking edge in $(Z_A \cup U_A) \times (Z_B \cup U_B)$. 
Consider any edge $(a,b) \in Z_A \times Z_B$. We have $\gamma_a = -1$ and $\gamma_b =1$ while  $\beta_a = 1$ and $\beta_b = -1$. 
We know from Claim~\ref{claim17} (given in the proof of Theorem~\ref{thm:M-U-M-L}) that 
$\wt_{M}(a,b) \le \gamma_a + \gamma_b = -1 + 1 = 0$. Similarly, $\wt_{L}(a,b) \le \beta_a + \beta_b = 1-1 = 0$. 
Thus $(a,b)$ is not a blocking edge to either $M$ or $L$.
Hence neither $M$ nor $L$ has a blocking edge in $Z_A \times Z_B$. 

Moreover, $G$ has no edge in $U_A \times U_B$. This is because
each vertex $u \in U_A \cup U_B$ has to be an unstable vertex---otherwise $u$ is stable and so $(u^-_{\ell},u^+_r)$ is an unpopular edge 
and thus forbidden.

Consider any edge $(a,b) \in U_A \times Z_B$. We have $\gamma_a = 0$ and $\gamma_b =1$. So $\wt_M(a,b) \le \gamma_a + \gamma_b = 0 + 1 = 1$. 
Since $\wt_M(a,b)$ is an even number, this means $\wt_M(a,b) \le 0$. Thus $(a,b)$ is not a blocking edge to $M$. 
We will next show that $(a,b)$ is not a blocking edge to $L$.

Since $a \in U_A$ and $b \in Z_B\subseteq B'_{-}$, the edges $(a^-_{\ell},a^+_r)$ and  $(b^-_{\ell},c^+_r)$ are
in $S_i$ for some neighbor $c$ of $b$. Note that $a^-_{\ell}$ is $a^+_r$'s least preferred superscript
$-$ neighbor in $H$. Thus $a^+_r$ did not receive any offer from $b^-_{\ell}$ in Algorithm~1.  Because $S_i$ is stable in $H$, 
it has to be the case that $b_{\ell}$ prefers $c^+_r$ to $a^+_r$.
Since $(c,b) \in L$, we have $\wt_L(a,b) = 0$. Thus $(a,b)$ is not a blocking edge to $L$.

An analogous argument shows that no edge in $Z_A \times U_B$ blocks either $M$ or $L$.
Thus $M$ and $L$ are stable matchings when restricted to vertices in $Z \cup U_A \cup U_B$. \qed
\end{proof}

\begin{lemma}
\label{new-correct2}
The matching $M$ restricted to vertices in $Z$ is the same as the matching $L$ restricted to vertices in $Z$.
\end{lemma}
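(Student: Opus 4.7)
The plan is to reduce the claim to showing $M(a) = L(a)$ for every $a \in Z_A$. This is sufficient because every $b \in Z_B$ satisfies $b \notin U_B$ and is therefore matched in both $M$ and $L$ to a genuine neighbor; since Algorithm~\ref{alg:fully-popular} marks \emph{entire} connected components of the popular subgraph $G_0$ and every edge of $S_i$ (hence of $M$ and $L$) is legal and therefore popular, those genuine partners must lie in $Z_A$. So agreement on $Z_A$ automatically propagates to agreement on $Z_B$, yielding $M|_Z = L|_Z$.

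For any fixed $a \in Z_A$, I would first show that $M(a), L(a) \in \{f(a), s(a)\} \cap Z_B$. The containment in $\{f(a), s(a)\}$ follows from Theorem~\ref{thm:A-popular}(1) applied to the $A$-popular matchings $M$ and $L$, and the containment in $Z_B$ follows from the component-marking argument above. If $|\{f(a), s(a)\} \cap Z_B| = 1$, equality is immediate, so I would focus on the case where both $f(a)$ and $s(a)$ lie in $Z_B$. For contradiction, assume $M(a) \ne L(a)$; by symmetry it suffices to handle $M(a) = f(a) = b$ and $L(a) = s(a) = c$.

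I would then analyze the $L$-partner and $M$-partner of $b$. Since $b = f(a)$ lies in the image of $f$, Theorem~\ref{thm:A-popular}(2) forces $b$ to be matched in $L$; let $a' = L(b)$. By Theorem~\ref{thm:A-popular}(1) the edge $(a',b)$ is valid, and because $b$ is somebody's top choice ($a$'s) it cannot equal $s(x)$ for any $x$, so $f(a') = b$. Note $a' \neq a$ because $L(a) = c \neq b$, and $a' \in Z_A$ by the component argument. Now $M(a') \in \{f(a'), s(a')\} = \{b, s(a')\}$, but $M(b) = a \neq a'$ rules out $M(a') = b$, leaving $M(a') = s(a')$.

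Finally, I would invoke Lemma~\ref{correct0}, which provides the stability of $M$ and $L$ on $Z \cup U_A \cup U_B$. The edge $(a,b)$ lies in $Z \times Z$, so it does not block $L$; since $a$ prefers $b = f(a)$ to its $L$-partner $c = s(a)$, the vertex $b$ must prefer $L(b) = a'$ to $a$. The edge $(a',b)$ also lies in $Z \times Z$, so it does not block $M$; since $a'$ prefers $b = f(a')$ to its $M$-partner $s(a')$, the vertex $b$ must prefer $M(b) = a$ to $a'$. These two statements are mutually contradictory, so the assumption $M(a) \ne L(a)$ fails. The main obstacle is the structural bookkeeping that locates the auxiliary agent $a' = L(b)$ in $Z_A$ and pins $M(a')$ to $s(a')$; once this is set up, Lemma~\ref{correct0} delivers the twin stability clash at $(a,b)$ and $(a',b)$ that closes the argument.
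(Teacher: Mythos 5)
Your proof is correct, and it reaches the conclusion by a somewhat different route than the paper. The paper proves a stronger structural fact: on each marked component of the popular subgraph $G_0$, the subgraph of valid edges admits a \emph{unique} stable matching (fixed by propagation starting from a job that is some agent's top choice, using that every agent has exactly two valid edges); since Lemma~\ref{correct0} makes both $M$ and $L$ stable there, they must coincide. You instead localize: assuming a single disagreement $M(a)\neq L(a)$, you use the valid-edge structure ($M(a),L(a)\in\{f(a),s(a)\}$, and $f$-jobs are never $s$-jobs) to manufacture the auxiliary agent $a'=L(b)$ with $f(a')=b$ and $M(a')=s(a')$, and then apply Lemma~\ref{correct0} twice to force $b$ to prefer $a'$ to $a$ (so that $(a,b)$ does not block $L$) and $a$ to $a'$ (so that $(a',b)$ does not block $M$), a contradiction. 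The ingredients are the same---Lemma~\ref{correct0} plus the rigidity of valid edges---but your packaging avoids the propagation induction and the footnote's edge case about sub-components containing no top-choice job, at the cost of not exhibiting the uniqueness of the stable matching on each marked component, which the paper's argument gives for free. Your reduction to $Z_A$ and the bookkeeping placing $a'$, $s(a')$, and $c=s(a)$ inside $Z$ (needed so that the restricted matchings of Lemma~\ref{correct0} retain these edges) are all justified correctly by the component-marking argument.
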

\begin{proof}
Consider any connected component $C$ in the popular subgraph $G_0$. The component $C$ splits into
sub-components $C'_1,\ldots,C'_t$ when we restrict edges to only those marked ``valid''. We claim there is exactly {\em one} stable 
matching $T_{C'_j}$ in each such sub-component $C'_j$. 
Assume $C'_j$ contains a job $b$ that is a top choice neighbor for some agent.\footnote{Otherwise $C'_j$ consists of a single edge $(a,s(a))$ for some $a \in A$; if there was another agent $a'$ in $C'_j$ then $s(a') = s(a)$ and so one of $a, a'$ would be left unmatched in $S_i$, a contradiction to $S_i$'s stability in $H$.} Then $b$ has to be matched in $T_{C'_j}$ to its 
most preferred neighbor $a$ in $C'_j$, otherwise $(a,b)$ would be a blocking edge to $T_{C'_j}$. Recall that every agent has exactly two 
valid edges incident to it. So fixing one edge $(a,b)$ in the matching fixes $T_{C'_j}$.

In more detail, every agent $a' \ne a$ in $C'_j$ such that $f(a') = b$ has to be matched in $T_{C'_j}$ to $s(a')$ (call it~$b'$).
Given that $a'$ is matched to $b'$, every agent $a'' \ne a'$ in $C'_j$ such that $s(a'') = b'$ has to be matched in $T_{C'_j}$ 
to $f(a'')$ and so on. Thus the matching 
$T_{C'_j}$ gets fixed. The same happens with every sub-component in $C$ and so the only stable matching in $C$ is $T_C = \cup_{j=1}^tT_{C'_j}$.

Let $C_1,\ldots,C_r$ be the connected components of $G_0$ that contain vertices in $Z$. So all vertices in $\cup_{i=1}^r C_i$ are marked, thus
$\cup_{i=1}^r C_i \subseteq Z \cup U_A \cup U_B$.
We know from Lemma~\ref{correct0} that both $M$ and $L$ are stable matchings in each $C_i$, where $1 \le i \le r$.
So $M$ (similarly, $L$) restricted to $\cup_{i=1}^r C_i$ is $\cup_{i=1}^rT_{C_i}$. Thus $M$ and $L$ have the same edges on $Z$. \qed
\end{proof}

Lemma~\ref{new-correct2} helps us in defining an appropriate witness $\vec{\alpha}$ to show $M$'s popularity in $G$.
Recall the vector $\vec{\gamma}$ defined in Theorem~\ref{thm:M-U-M-L}: we will set $\alpha_u = 0$ for all $u \in Z \cup U_B$ and 
$\alpha_u = \gamma_u$ otherwise. Before we use this vector $\vec{\alpha}$ to prove the popularity of $M$ in Theorem~\ref{thm:final}, 
we need the following two lemmas.

\begin{lemma}
\label{correct1}
For every $a \in A_{-} \setminus Z_A$, $a$ likes $M(a)$ at least as much as $L(a)$.
\end{lemma}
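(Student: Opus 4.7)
I plan to argue by contradiction: suppose $a \in A_{-} \setminus Z_A$ strictly prefers $b' = L(a)$ to $b = M(a)$. The while-loop termination gives $A_{-} \setminus Z_A \subseteq A'_{-}$, so $(a^-_{\ell}, b^+_r) \in S_i$ and $(b'^+_{\ell}, a^-_r) \in S_i$; thus $b' \in B'_{+}$ (so $\beta_{b'} = 1$) and, since $b'$ is matched in $L$, $b' \notin U_B$.

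The first step is to use the popularity witness $\vec{\gamma}$ of $M$ from Theorem~\ref{thm:M-U-M-L} to pin down $b'$'s position in the upper half. By Claim~\ref{claim17}, $\wt_M(a,b') \le \gamma_a + \gamma_{b'}$, and $\gamma_a = -1$. If $b' \in B_{-}$ then $\gamma_{b'}=-1$ gives $\wt_M(a,b')\le -2$, forcing $a$ to prefer $b$ to $b'$---contradiction. Hence $b' \in B_{+}$ with $\gamma_{b'}=1$, so $\wt_M(a,b') \le 0$; together with $a$'s preference for $b'$ over $b$, this forces $b'$ to strictly prefer $M(b')$ to $a$.

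Next, I would propagate the same structure to $y := M(b')$. Since $(y^-_{\ell}, b'^+_r) \in S_i$, we have $y \in A_{-}$. To see $y \notin Z_A$: if $y \in Z_A$, Lemma~\ref{new-correct2} would give $L(y)=M(y)=b'$, which combined with $L(a)=b'$ forces $y=a$---contradicting $M(a)=b\ne b'=M(y)$. Hence $y \in A_{-}\setminus Z_A \subseteq A'_{-}$, so $\beta_y=-1$. Applying Claim~\ref{claim16} to $(y,b')\in M$: $\wt_L(y,b') \le \beta_y+\beta_{b'}=0$; since $b'$ strictly prefers $y$ to $L(b')=a$, $y$ must strictly prefer $L(y)$ to $M(y)=b'$ (and $L(y)\ne b'$ by the same matching-uniqueness argument).

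Iterating yields a sequence $a_1=a,a_2=y,a_3,\ldots$ in $A_{-}\setminus Z_A$ with $a_{i+1}=M(L(a_i))$, each $a_i$ preferring $L(a_i)$ to $M(a_i)$, and each $b'_i:=L(a_i)\in B_+$ preferring $M(b'_i)=a_{i+1}$ to $L(b'_i)=a_i$. By finiteness the sequence closes into an alternating cycle $a_1,b'_1,\ldots,a_k,b'_k,a_1$ in $M\oplus L$. Swapping $M$-edges for $L$-edges along it produces a matching $N$ in $G$ where all $k$ agents $a_i$ prefer $N$ to $M$ and no agent prefers $M$ to $N$, giving $\phi_A(N,M)\ge k > 0 = \phi_A(M,N)$---contradicting the $A$-popularity of $M$. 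The main obstacle is the case analysis eliminating $b'\in B_{-}$ and verifying $y\notin Z_A$; the cycle construction is short once these are in place, but it is \emph{essential} that we use $A$-popularity rather than plain popularity, since on this cycle every job strictly prefers $M$ to $N$ and a two-sided vote would tie.
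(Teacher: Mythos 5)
Your proof is correct, but it takes a genuinely different route from the paper's. The paper argues locally: from $M(a)=s(a)$, $L(a)=f(a)$ it uses Claim~\ref{claim17} to place $f(a)$ in $B_+$ and to find $a'=M(f(a))\in A_-$ preferred by $f(a)$ to $a$; then the \emph{valid-edge structure} (every agent has only the two valid partners $f(\cdot),s(\cdot)$, and $s(a')$ cannot be anyone's top choice) forces $f(a)=f(a')$, so $(a',f(a))$ blocks $L$ and violates $\beta_{a'}+\beta_{f(a)}=0$ from Claim~\ref{claim16} in a single step. You instead avoid the $f/s$ structure entirely: you use Claims~\ref{claim16} and~\ref{claim17} alternately to propagate the hypothesis ``$a_i\in A_-\setminus Z_A$ prefers $L(a_i)$ to $M(a_i)$'' along an $M$/$L$-alternating walk (with Lemma~\ref{new-correct2} ruling out entry into $Z$), close it into a cycle of $M\oplus L$, and contradict the $A$-popularity of $M$, which the paper has indeed already established from Theorem~\ref{thm:A-popular} before this point, so there is no circularity. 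Your observation that the contradiction needs one-sided rather than two-sided popularity (the jobs on the cycle all prefer $M$) is accurate and is precisely why the iteration must terminate in a cycle of \emph{agents} who improve. The paper's argument is shorter and exposes where the $f/s$ structure is used; yours is more global, trades that structural fact for the already-proved $A$-popularity of $M$, and requires the extra bookkeeping that the walk stays in $A_-\setminus Z_A$ and that every vertex it visits has degree two in $M\oplus L$ --- all of which you supply.
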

\begin{proof}
Suppose not. Then $M(a) = s(a)$ while $L(a) = f(a)$.
We claim $f(a) \in B_+$. Otherwise $f(a) \in B_{-}$, however for every edge 
$(x,y) \in A_{-}\times B_{-}$, we have $\wt_{M}(x,y) \le \gamma_x + \gamma_y = -2$ (by Claim~\ref{claim17}). 
But $a$ prefers $f(a)$ to its partner in $M$, thus $\wt_{M}(a,f(a)) \ge 0$. Hence $f(a) \in B_+$.
Since $\wt_{M}(x,y) \le 0$ for every edge $(x,y) \in A_{-} \times B_+$, we can conclude that $\wt_{M}(a,f(a)) = 0$, i.e.,
$f(a)$ is matched in $M$ to a neighbor $a' \in A_{-}$ that it prefers to $a$. Since $S_i$ uses
only valid edges, this means $f(a) = f(a')$, i.e., $f(a)$ is the top choice neighbor of $a'$.

We now move to the lower half of $H$: observe that both $a$ and $a'$ are in $A'_{-}$. This is because there is no unmarked vertex 
in $A_{-} \cap A'_+$ by the termination condition of our while-loop. Note that $a$ is unmarked since $a \notin Z_A$. Thus $a'$ is also
unmarked since $(a, f(a))$ and $(a',f(a))$ are popular edges, hence $a$ and $a'$ are in the same connected component in $G_0$.
Since $a \in A'_{-}$, $L(a) = f(a)$ is in $B'_+$. Consider the edge $(a',f(a)) \in A'_{-} \times B'_+$: both $a'$ and $f(a)$ prefer each 
other to their respective partners in $L$. This means $\wt_{L}(a',f(a)) = 2$. However for each edge $(x,y) \in A'_{-} \times B'_+$, we have 
$\wt_{L}(x,y) \le \beta_x + \beta_y = 0$ (by Claim~\ref{claim16}), a contradiction. 
So for every $a \in A_{-} \setminus Z_A$, it has to be the case that $a$ likes $M(a)$ at least as much as $L(a)$. \qed
\end{proof}

\begin{lemma}
\label{correct3}
For every $a \in A_+ \cap A'_+$,  $a$ likes $M(a)$ at least as much as $L(a)$.
\end{lemma}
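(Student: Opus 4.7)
The plan is to argue by contradiction. Suppose some $a \in A_+ \cap A'_+$ strictly prefers $L(a)$ to $M(a)$; since any $A$-popular matching pairs $a$ with $f(a)$ or $s(a)$ and $a$ prefers $f(a)$, this forces $M(a) = s(a)$ and $L(a) = f(a)$. Thus $S_i$ contains $(a_\ell^+, s(a)_r^-)$ in the upper half and $(f(a)_\ell^-, a_r^+)$ in the lower half, and in particular $a_\ell$ was rejected by $f(a)_r^-$ during the algorithm, so $f(a)_r$ holds some partner strictly preferred to $a_\ell^+$ in $S_i$.

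I would first rule out the case where $f(a)_r$'s partner in $S_i$ is a superscript $-$ proposer, i.e., $(d_\ell^-, f(a)_r^+) \in S_i$ for some agent $d$. This places $f(a) \in B_+$, while the lower-half edge places $f(a) \in B'_-$, so $f(a) \in B_+ \cap B'_-$. By the termination condition of the while-loop of Algorithm~\ref{alg:fully-popular}, $f(a)$ must be marked, so $f(a) \in Z_B$. The edge $(a,f(a))$ is popular in $G$ (it corresponds to the legal edge $(f(a)_\ell^-, a_r^+) \in S_i$), so $a$ lies in $f(a)$'s connected component of $G_0$ and is therefore also marked; but then $a \in Z_A \subseteq A_- \cap A'_+$ contradicts $a \in A_+$.

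Therefore $f(a)_r$'s partner in $S_i$ is a superscript $+$ proposer, say $(c_\ell^+, f(a)_r^-) \in S_i$, where $c$ is the agent that $f(a)$ ranks highest in $T = \{x \in A : f(x) = f(a)\}$. Since $a \in T$ but $M(a) = s(a) \neq f(a)$, we have $c \neq a$ and $f(a)$ strictly prefers $c$ to $a$. The key step, and the one I expect to be the main obstacle, is to show that the edge $e = (f(a)_\ell^-, c_r^+)$ in the lower half blocks $S_i$. The edge $e$ is legal: the underlying edge $(c, f(a))$ in $G$ is valid (because $f(c) = f(a)$) and popular (because $(c_\ell^+, f(a)_r^-) \in S_i$); moreover $e$ has the shape $(\ast_\ell^-, \ast_r^+)$, so it is never put on the forbidden list by the while-loop, which only forbids edges $(u_\ell^+, \ast)$ and $(\ast, u_r^-)$ for marked agents $u$.

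It remains to verify that both endpoints of $e$ strictly prefer it to their $S_i$-partners. From $f(a)_\ell$'s side, both $a_r^+$ and $c_r^+$ are superscript $+$ neighbors, and $f(a)_\ell$ ranks superscript $+$ neighbors by $f(a)$'s original preference; since $f(a)$ prefers $c$ to $a$, the edge $e$ is preferred to $(f(a)_\ell^-, a_r^+)$. From $c_r$'s side, $e$ is the single top edge at $c_r$ in all of $H$: every vertex prefers superscript $-$ neighbors to superscript $+$ ones, and among superscript $-$ neighbors $c_r$ ranks $f(c) = f(a)$ first since $f(c)$ is $c$'s top choice in $G$. So $c_r$ strictly prefers $e$ to its $S_i$-partner. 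Hence $e$ blocks $S_i$, contradicting the stability of $S_i$ in $H$, and so the assumption $M(a) = s(a), L(a) = f(a)$ is impossible, giving $M(a) \succeq_a L(a)$.
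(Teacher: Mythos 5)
Your proof is correct and follows essentially the same route as the paper's: it rules out $f(a)\in B_+$ using the while-loop termination condition together with the fact that whole components of $G_0$ get marked, identifies the agent $c$ (the paper's $a'$) with $f(c)=f(a)$ to whom $f(a)$ is matched in the upper half, and derives the contradiction from the pair $(c,f(a))$ in the lower half. The only cosmetic difference is that you exhibit $(f(a)_{\ell}^-,c_r^+)$ directly as a blocking edge to $S_i$ in $H$, whereas the paper phrases the same contradiction as a violation of the witness inequality $\beta_{a'}+\beta_{f(a)}\ge \wt_L(a',f(a))=2$ from Claim~\ref{claim16}, which is itself a consequence of $S_i$'s stability.
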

\begin{proof}
Suppose not. Then $M(a) = s(a)$ while $L(a) = f(a)$. Since $a \in A'_+$, $L(a) = f(a) \in B'_{-}$. This implies $f(a) \in B_{-}$ since 
there is no unmarked vertex in $B_+ \cap B'_{-}$ by the termination condition of our while-loop. We know $f(a)$ is unmarked
since $a$ (its partner in $L$) is unmarked and this is because $a \in A_+$. Since $a \in A_+$ and $f(a) \in B_{-}$, we have 
$\wt_{M}(a,f(a)) \le \gamma_a + \gamma_b = 0$ (by Claim~\ref{claim17}). So $f(a)$ has to be matched in $M$ to a more preferred
neighbor $a' \in A_+$.  As argued in the proof of Lemma~\ref{correct1}, it follows from the legality of $S_i$ that $f(a)$ is 
the top choice neighbor of $a'$.

Consider the matching $L$ in the lower half of $H$. Since $L(a) = f(a)$, $\wt_{L}(a',f(a))~=~2$.
That is, $(a',f(a))$ is a blocking edge to $L$. We need $\beta_{a'} = \beta_{f(a)} = 1$ to ensure 
$\beta_{a'} + \beta_{f(a)} \ge \wt_{L}(a',f(a)) = 2$ (by Claim~\ref{claim16}).
However $f(a) \in B'_{-}$ since $a \in A'_+$. This means $\beta_{f(a)} = -1$, a contradiction. 
Thus for any $a \in A_+ \cap A'_+$, it follows that $a$ likes $M(a)$ at least as much as $L(a)$. \qed
\end{proof}

We are now ready to prove the popularity of $M$ in $G$.

\begin{theorem}
\label{thm:final}
The matching $M$ is popular in $G$.
\end{theorem}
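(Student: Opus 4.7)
My plan is to construct an explicit witness $\vec{\alpha} \in \{0,\pm 1\}^n$ satisfying the conditions of Theorem~\ref{thm:witness}. Define $\alpha_u = 0$ for $u \in Z \cup U_B$ and $\alpha_u = \gamma_u$ otherwise, where $\vec{\gamma}$ is the witness from Theorem~\ref{thm:M-U-M-L} certifying $M$'s popularity in $G \setminus U_B$. I would first check $\sum_u \alpha_u = 0$: since $M$ pairs $A_+$ with $B_-$ and $A_- \setminus Z_A$ with $B_+ \setminus Z_B$ (with Lemma~\ref{new-correct2} ensuring that $Z_A$ and $Z_B$ are matched within $Z$ by $M$, so these vertices contribute $0$ to the sum), the $\pm 1$ contributions cancel. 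The self-loop constraints $\alpha_u \ge \wt_M(u,u)$ are immediate: $\alpha_u = 0 = \wt_M(u,u)$ for $u \in U_A \cup U_B$ and $\alpha_u \ge -1 = \wt_M(u,u)$ otherwise.

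The substantive work is verifying $\alpha_a + \alpha_b \ge \wt_M(a,b)$ for every $(a,b) \in E$. For edges where neither endpoint lies in $Z \cup U_B$, this reduces to Claim~\ref{claim17}; when $a \in Z_A$ the replacement $\gamma_a = -1 \mapsto \alpha_a = 0$ only strengthens the inequality, so Claim~\ref{claim17} still suffices. For edges with $b \in U_B$, the key structural fact is that $b$ is nobody's top choice (since the self-loop $(b,b)$ must be valid). Combined with $M(a) \in \{f(a), s(a)\}$ coming from $A$-popularity of $M$, this forces $a$ to strictly prefer $M(a)$ to $b$, ruling out a block; parity of $\wt_M$ together with Claim~\ref{claim17} then closes the bound.

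The genuinely new case is $b \in Z_B$, where $\alpha_b = 0$ is strictly weaker than $\gamma_b = 1$. Here the plan is to use Lemma~\ref{new-correct2}, which gives $M(b) = L(b)$, so any preference of $b$ for its $L$-partner over $a$ transfers verbatim to its $M$-partner. For $a \in A_- \setminus Z_A$, Claim~\ref{claim16} applied to $\vec{\beta}$ yields $\wt_L(a,b) \le -2$; Lemma~\ref{correct1} then gives $M(a) \succeq_a L(a) \succ_a b$, and together with $M(b) = L(b) \succ_b a$ this upgrades to $\wt_M(a,b) = -2 \le -1 = \alpha_a + \alpha_b$. For $a \in A_+ \cap A'_+$, Claim~\ref{claim16} gives $\wt_L(a,b) \le 0$ and Lemma~\ref{correct3} gives $M(a) \succeq_a L(a)$; combined with $M(b) = L(b)$, this rules out $(a,b)$ blocking $M$, so $\wt_M(a,b) \le 0 \le 1 = \alpha_a + \alpha_b$. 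For $a \in A_+ \cap A'_-$, I would instead invoke stability of $S_i$ at the edge $(b^-_\ell, a^+_r)$ in the lower half of $H$: the non-block condition there forces $b$ not to prefer $a$ over $L(b) = M(b)$, again precluding any block.

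The main obstacle I anticipate is organizing the case analysis for $b \in Z_B$ cleanly, since one has to combine the $\vec{\beta}$-witness for $L$ with the preference-chaining supplied by Lemmas~\ref{correct1} and~\ref{correct3}, and separately handle the mixed subcase $a \in A_+ \cap A'_-$ via lower-half stability rather than via a lemma. The unifying principle is that $L$'s popularity is strong enough when $a$ sits on the ``$-$'' side (yielding $\wt_L = -2$), while Lemma~\ref{correct3} or lower-half stability is what converts non-blocking-in-$L$ into non-blocking-in-$M$ in the mixed cases. Once all edge constraints are verified, Theorem~\ref{thm:witness} immediately certifies $\vec{\alpha}$ as a witness of $M$'s popularity in $G$.
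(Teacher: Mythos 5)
Your witness $\vec{\alpha}$ is exactly the paper's, and your treatment of the edges into $Z_B$ (via Claim~\ref{claim16}, Lemma~\ref{new-correct2}, Lemmas~\ref{correct1} and~\ref{correct3}, with lower-half stability handling the mixed subcase $a \in A_+\cap A'_-$) matches the paper's argument essentially step for step. The genuine gap is in your handling of edges $(a,b)$ with $b \in U_B$. Your validity argument does correctly show that $a$ strictly prefers $M(a)$ to $b$; but since $b$ is left unmatched in $M$, the vertex $b$ votes \emph{for} $a$, so this only yields $\wt_M(a,b) = 0$ (not $-2$), and the constraint to be verified is $\alpha_a + \alpha_b = \alpha_a \ge 0$. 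Nothing in your argument excludes $a \in A_-\setminus Z_A$, where $\alpha_a = -1$ and the constraint would read $-1 \ge 0$. Claim~\ref{claim17} cannot rescue this: it is stated only for $b \in B\setminus U_B$ (indeed $\gamma_b$ is not even defined on $U_B$), so ``parity plus Claim~\ref{claim17}'' does not apply to these edges. Ruling out a block is strictly weaker than what the witness condition demands here.

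The missing ingredient is a lower-half argument. Since $(b_\ell^-,b_r^+)\in S_i$ and $b_r^+$ is $b_\ell$'s last choice, $b_\ell$ proposed to and was rejected by every neighbor $a_r$, which forces $(c_\ell^-,a_r^+)\in S_i$ for some neighbor $c$ that $a$ prefers to $b$. Hence every neighbor $a$ of a vertex in $U_B$ lies in $A'_+$ and prefers $L(a)$ to $b$; by the while-loop termination condition $A'_+\setminus Z_A\subseteq A_+$, so $\alpha_a\in\{0,1\}$, and Lemmas~\ref{new-correct2} and~\ref{correct3} transfer the preference to $M$, giving $\wt_M(a,b)=0\le\alpha_a+\alpha_b$. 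This is how the paper certifies these edges, and your proof needs it (or an equivalent) to go through. A smaller point: your explicit case list for $b\in Z_B$ omits $a\in U_A\cup Z_A$; this is routine (Claim~\ref{claim17} plus the parity of $\wt_M$) but should be stated.
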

\begin{proof}
The popularity of $M$ in $G$ will be shown using $\vec{\alpha}$ defined below: 
\begin{itemize}
\item $\alpha_u = 0$ for $u \in Z \cup U_A \cup U_B$.
\item $\alpha_u = 1$ for $u \in A_+ \cup (B_+\setminus Z_B)$. 
\item $\alpha_u = -1$ for $u \in B_{-} \cup (A_{-}\setminus Z_A)$.
\end{itemize}

We have $M \subseteq (A_+\times B_{-}) \cup (Z_A \times Z_B) \cup ((A_{-}\setminus Z_A) \times (B_+\setminus Z_B))$ (see Fig.~\ref{fig:third}).
Thus $\sum_{u \in A \cup B}\alpha_u = 0$. Also, $\alpha_u \ge \wt_{M}(u,u)$ for all vertices $u \in A \cup B$ since 
$\alpha_u = 0 = \wt_{M}(u,u)$ for $u \in U_A \cup U_B$ and $\alpha_u \ge -1 = \wt_{M}(u,u)$ for all other~$u$. 
To show $M$'s popularity using Theorem~\ref{thm:witness}, we need to prove that $\alpha_a + \alpha_b \ge \wt_{M}(a,b)$ for all edges $(a,b)$.

We will first show this constraint holds for edges incident to vertices in $U_B$.
For this, we will use the matching $L$.
It is easy to see that the neighborhood of $U_B$ is in $A'_+$ and also that each  $a \in A'_+$ prefers its partner in $L$ to 
$b \in U_B$. This is because $(b_{\ell}^-,b_r^+) \in S_i$ and $b_r^+$ is $b_{\ell}$'s least preferred neighbor, thus $b_{\ell}$
must have been rejected by all its more preferred neighbors in our algorithm, i.e., every neighbor $a^+_r$ of $b_{\ell}$
received a proposal from $b_{\ell}^-$. Since $a_r$ prefers superscript~$-$ neighbors to superscript~$+$ neighbors, this means
$(c^-_{\ell},a^+_r) \in S_i$ for some neighbor 
$c^-_{\ell}$ that $a_r$ prefers to $b^-_{\ell}$, i.e., $a$ prefers $c$ to $b$. Thus $a \in A'_+$.

We have $A'_+ = Z_A \cup (A'_+ \setminus Z_A)$ and $A'_+ \setminus Z_A \subseteq A_+$ (by the while loop termination condition).
Lemma~\ref{new-correct2} and Lemma~\ref{correct3} showed that for $a \in Z_A \cup (A_+\cap A'_+)$, 
$a$ likes $M(a)$ at least as much as $L(a)$ and we showed in the above paragraph that each  $a \in A'_+$ prefers $L(a)$ to $b$.
Thus $\wt_{M}(a,b) = 0$. Since we set $\alpha_a = 0$ for $a \in Z_A$ and $\alpha_a = 1$ for $a \in A_+$,
we have $\alpha_a + \alpha_b \ge 0 = \wt_{M}(a,b)$.

We now need to show $\alpha_a + \alpha_b \ge \wt_{M}(a,b)$ holds for all edges $(a,b)$ in $G \setminus U_B$.
Recall the witness $\vec{\gamma}$ defined in the proof of Theorem~\ref{thm:M-U-M-L} to show the popularity of $M$ in 
the subgraph $G \setminus U_B$. Observe that it is only for vertices $u$ in $Z$ that we have $\alpha_u \ne \gamma_u$.
Moreover, $\alpha_a > \gamma_a$ for $a \in Z_A$. 

For $b \in Z_B$, we have $\alpha_b = 0$ while $\gamma_b = 1$. Thus we have to worry about edges $(a,b)$ in $G \setminus U_B$
where $b \in Z_B$ and check that $\wt_{M}(a,b) \le \alpha_a + \alpha_b$. Edges in $G \setminus U_B$ that are {\em not} incident to $Z_B$ are covered by $\vec{\alpha}$ 
since $\vec{\gamma}$ covers these edges and $\alpha_u \ge \gamma_u$ for all $u \notin Z_B$. 

Let $b \in Z_B \subseteq B_+ \cap B'_{-}$. We consider the following three possibilities for the vertex $a$.
\begin{enumerate}
\item Suppose $a \in U_A \cup Z_A$. For any $(a,b) \in (U_A \cup Z_A) \times B_+$, we have $\wt_{M}(a,b) \le \gamma_a + \gamma_b \le 0 + 1$. Because
$\wt_{M}(a,b)$ is an even number, this means $\wt_{M}(a,b) \le 0$. Since
$\alpha_a = 0$ for $a \in U_A\cup Z_A$ and $\alpha_b = 0$ for $b \in Z_B$, we have $\wt_{M}(a,b) \le 0 = \alpha_a + \alpha_b$.

\item Suppose $a \in A_{-} \setminus Z_A$. Then $a \in A'_{-}$ by the termination condition of the while-loop in our algorithm.
Since $\wt_{L}(x,y) \le \beta_x + \beta_y = -2$ for every edge $(x,y) \in A'_{-} \times B'_{-}$,  it follows that 
$b \in Z_B \subseteq B'_{-}$ prefers $L(b)$ to $a$ and similarly, $a \in A'_{-}$ prefers $L(a)$ to $b$.

We know from  Lemma~\ref{new-correct2} that $M(b) = L(b)$, so $b$ prefers $M(b)$ to $a$.
We know from Lemma~\ref{correct1} that 
$a$ likes $M(a)$ at least as much as $L(a)$, so $a$ prefers $M(a)$ to $b$. Thus $\wt_{M}(a,b) = -2 < \alpha_a + \alpha_b$
since $\alpha_a = -1$ and $\alpha_b = 0$.

\item Suppose $a \in A_+$. There are two subcases here: (i)~$a \in A'_{-}$ and (ii)~$a \in A'_+$. In subcase~(i), 
$\wt_{L}(a,b) \le \beta_a + \beta_b = -2$. Since $M(b) = L(b)$ (by  Lemma~\ref{new-correct2}), it means that $b$ prefers 
$M(b)$ to $a$. Hence $\wt_{M}(a,b) \le 0 < \alpha_a + \alpha_b$ since $\alpha_a = 1$ and $\alpha_b = 0$ here.

Consider subcase~(ii). We have $\wt_{L}(a,b) \le \beta_a + \beta_b = 0$. So either
(1)~$b$ prefers $L(b)$ to $a$ or (2)~$a$ prefers $L(a)$ to $b$. In case~(1), we have $\wt_{M}(a,b) \le 0$ since $M(b) = L(b)$
(by Lemma~\ref{new-correct2}). In case~(2) also, we have $\wt_{M}(a,b) \le 0$ since $a$ likes $M(a)$ at least as much as $L(a)$ (by Lemma~\ref{correct3}).
So in both cases we have $\wt_{M}(a,b) \le 0 < \alpha_a + \alpha_b$ since $\alpha_a = 1$ and $\alpha_b = 0$ here.
\end{enumerate}
Thus $\vec{\alpha}$ is a witness of $M$'s popularity (by Theorem~\ref{thm:witness}). Hence $M$ is popular in $G$. \qed
\end{proof}

Since $M$ is $A$-popular (recall that it uses only valid edges), Theorem~\ref{thm:final} immediately implies that $M$ is fully popular in $G$.
Moreover, $M$ is a {\em max-size} fully popular matching in $G$, as shown below.

\begin{lemma}
\label{lemma:last}
The matching $M$ is a max-size fully popular matching in $G = (A\cup B, E)$.
\end{lemma}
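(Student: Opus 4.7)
The plan is to reduce the max-size claim to controlling the self-looped vertices. Since $G$ has been augmented with self-loops, every legal stable matching in $H$ is perfect and produces a perfect matching in $(A\cup B, E')$; the size of the corresponding matching in the original instance $G = (A\cup B, E)$ is $\tfrac{1}{2}(|A|+|B| - k)$, where $k$ is the number of vertices matched along self-loops. So maximizing the size of a fully popular matching in $G$ is the same as minimizing the number of self-looped vertices across all fully popular matchings. For the matching $M$ output by Algorithm~\ref{alg:fully-popular}, the self-looped vertices are precisely $U_A \cup U_B$ by construction.

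Next, I would pick an arbitrary fully popular matching $N$ in $G$ with some witness $\vec{\alpha}$ and compare it to $M$ through the realization map. By Lemma~\ref{lemma:new}, the realization $N^*_{\alpha}$ lies in the sublattice $\mathcal{D}_i$ of legal stable matchings in $H$. By the classical result of \cite{GI89} (and as already invoked in Lemma~\ref{lemma:forced-agents} and the proof of Claim~\ref{clm5}), the matching $S_i$ is the $(A_L\cup B_L)$-optimal matching in $\mathcal{D}_i$. Hence for every $u_\ell\in A_L\cup B_L$, the partner $S_i(u_\ell)$ is the most preferred partner of $u_\ell$ across all matchings in $\mathcal{D}_i$.

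The crucial step, which I expect to be the main obstacle, is to translate self-loops of $S_i$ into self-loops of $N$. For each $u\in U_A\cup U_B$, we have $S_i(u_\ell) = u_r^+$, and by construction of the preference lists in $H$, the vertex $u_r^+$ is the globally \emph{worst} neighbor of $u_\ell$. Thus no partner in $H$ is worse than $u_r^+$; combined with $A_L\cup B_L$-optimality, this forces every matching in $\mathcal{D}_i$ to match $u_\ell$ to $u_r^+$ as well. Applying this to $N^*_\alpha \in \mathcal{D}_i$, we get $(u_\ell^-, u_r^+)\in N^*_\alpha$, and the definition of the realization map then gives $(u,u)\in N$, i.e.\ $u$ is self-looped in $N$.

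Putting the pieces together, $U_A\cup U_B$ is contained in the set of self-looped vertices of $N$ (for \emph{every} choice of witness, and hence for $N$ itself independently of witness), so $N$ has at least $|U_A| + |U_B|$ self-looped vertices. Therefore $|N| \le \tfrac{1}{2}(|A|+|B| - |U_A| - |U_B|) = |M|$. Combined with Theorem~\ref{thm:final} and the $A$-popularity of $M$ noted after Theorem~\ref{thm:M-U-M-L}, this establishes that $M$ is a max-size fully popular matching, completing the proof.
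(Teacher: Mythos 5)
Your proposal is correct and follows essentially the same route as the paper's proof: both rest on the $(A_L\cup B_L)$-optimality of $S_i$ in the sublattice ${\cal D}_i$, combined with the fact that $u_r^+$ is $u_{\ell}$'s least preferred neighbor, to force $(u_{\ell}^-,u_r^+)$ into the realization $N^*_{\alpha}$ of any fully popular matching $N$, hence $u$ unmatched in $N$. The only cosmetic difference is that you count self-loops over $U_A\cup U_B$ and divide by two, whereas the paper counts only the unmatched agents $U_A$ and concludes $|M|=|A\setminus U_A|\ge |N|$ directly.
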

\begin{proof}
Observe that $U_A$ is the set of agents left unmatched in the matching $M$. We claim that all the agents in $U_A$ are left 
unmatched in any fully popular matching $N$ in $G$. We will use the fact that the matching $S_i$ is the 
$(A_L\cup B_L)$-optimal matching in the lattice ${\cal D}_i$ to prove this claim. 

Let $a\in A$ be such that $(a_{\ell}^-,a_r^+) \in S_i$. Since $S_i$ is the 
$(A_L\cup B_L)$-optimal matching in the lattice~${\cal D}_i$, if $a_{\ell}$ is matched to its least preferred neighbor $a^+_r$ in $S_i$, then $a_{\ell}$ 
cannot be matched to a better neighbor in the realization $N^*_{\alpha}$ of $N$, for any witness $\vec{\alpha}$ of $N$. In other words,
$(a_{\ell}^-,a_r^+) \in N^*_{\alpha}$. Thus $a$ is left unmatched in $N$ as well.
Hence $|M| = |A\setminus U_A| \ge |N|$. 
\qed
\end{proof}

\paragraph{\bf Running time of the algorithm.} The set of popular edges can be computed in linear time~\cite{CK16} and similarly,
the set of valid edges can be computed in linear time~\cite{AIKM07}.
The Gale-Shapley algorithm with forbidden edges in $H$ can be implemented to run in time linear in the size of $H$~\cite{GI89}, which is $O(m+n)$, 
where $|E| = m$ and $|A\cup B| = n$. 

Let us consider the time  taken by Algorithm~\ref{alg:fully-popular} in line~6 added up over all iterations. This is the same as running  
the Gale-Shapley algorithm with forbidden edges. 
More explicitly, when $S_{i-1}$ is modified to $S_i$ in the $i$-th iteration, all the intermediate edges considered while modifying $S_{i-1}$ to $S_i$ 
(these edges are now forbidden) are henceforth deleted from the graph. Thus the total time taken by Algorithm~\ref{alg:fully-popular} in line~6 added 
up over all the iterations is linear in the size of $H$.

It is easy to see that updating the sets $A_-, A'_+,B'_-, B_+$  takes $O(1)$ time per vertex since any vertex can move at most once from $A_+$ to $A_-$ (similarly, from $A'_-$ to $A'_+$ and from $B'_+$ to $B'_-$ and from $B_-$ to $B_+$). 
We need to check at the start of each iteration if there is an unmarked vertex in $(A_- \cap A'_+) \cup (B_+ \cap B'_-)$.
This can be implemented efficiently by maintaining a list of vertices $u$ such that both $(u^-_{\ell},\ast)$ and $(\ast,u^+_r)$ are in our
matching. 

So for each vertex $u$, whenever (i)~$u_{\ell}$ starts proposing to superscript $+$ neighbors or (ii)~$u_r$ receives a 
proposal from a superscript $-$ neighbor, we check if $u_{\ell}$ and $u_r$ are in opposite states. If so, then 
$u$ is added to the end of this list. We use a pointer that traverses this list once from left to right during the entire course of 
the algorithm. At the start of each iteration, we start from the current position of this pointer and traverse rightwards in the list 
searching for a vertex that is still unmarked.
Thus we can efficiently check if  $(A_- \cap A'_+) \cup (B_+ \cap B'_-)$ has an unmarked vertex or not. Hence our algorithm can be 
implemented to run in linear time. Thus Theorem~\ref{thm:algo} follows.

\MainTheorem*

\paragraph{\bf Acknowledgements.} Supported by the DAE, Government of India, under project no. RTI4001.
Thanks to Yuri Faenza for discussions that led to this problem and his helpful comments on the manuscript. 
Thanks to the reviewers of the conference version of this paper for their suggestions on improving the presentation.

\end{document}